\numberwithin{equation}{section}
\newtheorem{theorem}{Theorem}[section]
\newtheorem{lemma}[theorem]{Lemma}
\newtheorem{prop}[theorem]{Proposition}
\newcommand{\cE}{{\ensuremath{\mathcal E}} }
\newcommand{\cC}{{\ensuremath{\mathcal C}} }
\newcommand{\cN}{{\ensuremath{\mathcal N}} }
\newcommand{\cO}{{\ensuremath{\mathcal O}} }
\newcommand{\cR}{{\ensuremath{\mathcal R}} }
\newcommand{\bbE}{{\ensuremath{\mathbb E}} }
\newcommand{\bbN}{{\ensuremath{\mathbb N}} }
\newcommand{\N}{{\ensuremath{\mathbb N}} }
\newcommand{\bbP}{{\ensuremath{\mathbb P}} }
\newcommand{\bbR}{{\ensuremath{\mathbb R}} }
\newcommand{\bbZ}{{\ensuremath{\mathbb Z}} }
\newcommand{\bo}{{\ensuremath{\mathbf 1}} }
\newfont{\indic}{bbmss12}
\title[ Fick's law in a random lattice Lorentz gas ]
      {Fick's law in a random lattice Lorentz gas}
\author[R.\ Lefevere]{Rapha\"el Lefevere}
 \address{Laboratoire de Probabilit\'es
  et Mod\`eles Al\'eatoires (CNRS UMR 7599), Universit\'e Paris Diderot,
UFR de Math\'ematiques, b\^atiment Sophie Germain,
5 rue Thomas Mann,
75205 Paris CEDEX 13
France}
\email{lefevere\@@math.univ-paris-diderot.fr}
\begin{document}

\begin{abstract}
We provide a proof that the stationary macroscopic current of particles in a random lattice Lorentz gas satisfies Fick's law when connected to particles reservoirs. We consider a box on a $d+1$-dimensional lattice and when $d\geq7$, we show that under a diffusive rescaling of space and time, the probability to find a current different from its stationary value is exponentially small in time. Its stationary value is given by the conductivity times the difference of chemical potentials of the reservoirs. The proof is based on the fact that in high dimension, random walks have a small probability of making loops or intersecting each other when starting sufficiently far apart.

\end{abstract}

\maketitle
\section{Introduction}
Ever since the works of the founding fathers of statistical mechanics, the derivation of the macroscopic laws of physics as the result of the motion of the microscopic components has been a major challenge which remains largely unsolved to this day.    Fick's law is one of those central laws of macroscopic physics.  It states that, after some transient time, the current of particles crossing an extended macroscopic system of length $L$ decreases like the inverse power of $L$.  A paradigmatic model  in this context is provided by the Lorentz gas : it consists of tracer particles moving freely in a box and colliding with fixed obstacles.  The only rigorous derivation of Fick's law was achieved by Bunimovich and Sinai in \cite{bunisinai} for a finite horizon Lorentz gas when the scatterers have a specific shape that gives rise to a strongly chaotic dynamics. However, it is unlikely that the microscopic dynamics of a typical material possess the  special properties of a strongly chaotic billiard.
And, as advocated by Bunimovich \cite{bunimovich}, a more satisfactory result from a conceptual point of view would be to establish diffusion in a {\it random} Lorentz gas.  In that case, obstacles of arbitrary shape are thrown at random in a box.  The goal is to show that, after a diffusive rescaling of space and time is performed, macroscopic observables obey the laws of diffusion with very large probability with respect to the distribution of the obstacles.    If one looks at Fick's law for the macroscopic current, this requires to control not only  its average but also, at least, its variance.  In contrast to the the Bunimovich-Sinai case, the randomness of the scatterers induce correlations between the trajectories and therefore also between occupation numbers (or local empirical densities) at different points in space.

In this paper we consider the $d$-dimensional version of the model \cite{Lefevere} which can be seen as a random lattice Lorentz gas (see figure 3 below) introduced by Ruijgrook and Cohen \cite{Ruijgrok}.  In that model, also called the mirrors model, the motion of particles is restricted to the edges of a regular lattice and some scatterers sit randomly at the vertices of the lattice. The motion of the particles is described as a deterministic walk in a random environment.  

We couple our model with a fixed density of scatterers (i.e. a non-dilute gas) to particle baths at constant chemical potentials.   We focus on the macroscopic current of particles through a section of the system. In high dimensional systems, we establish Fick's law for the stationary current as a weak law of large number in the size of the system.  We also show that, under a diffusive rescaling of time,  at any time, the difference between the current and its stationary value is exponentially small in time.

The approach of this paper to diffusion in Lorentz gases is novel and different from the traditional one, based on the Boltzmann equation, used for instance in the recent paper  by Basile, Nota, Pezzotti and Pulvirenti\cite{Basile}.  There, it is shown that the stationary average current in the dilute case obeys Fick's law.  

The basic idea in our approach is the following. We first relate the macroscopic current to the orbits of the dynamical system.  This is a result that is valid for deterministic realisations of the scatterers.  In a second step we take a random distribution of scatterers. The orbits become then random objects similar to random loops with strong exclusion constraints among themselves.  We show that in high dimension, most of the orbits cross the macroscopic system on a diffusive time-scale as if they were independent random walks.  This allows to obtain Fick's law as a weak law of large numbers. The average stationary current in this limit maybe identified as the difference between chemical potential times the probability for a particle to cross the system,  an idea that was put forward by Casati, Mejia-Monasterio and Prosen in \cite{Casati}, in the context of  chaotic systems.
We show that in high dimension,  the dominant part of this crossing probability is given by the probability that a lazy random walk crosses a system of size $N$.   This is possible partly because orbits do not make ``loop" of size smaller than $N$. The probability of jumping to a neighbour  in this walk gives the diffusion constant.
We have chosen to focus here on the macroscopic current and the approach to its stationary value.  The derivation of the diffusion equation  for the macroscopic density of particles as a law of large numbers in a diffusive scaling limit  is tractable by the same methods.  

There is no fundamental obstacle to apply the general strategy underlying this paper to the case of more general mirrors models and of continuous space and time dynamics. This is especially true for the results of section 3. Regarding the cases of lower dimensional versions of our model and of the mirrors models, we note that a more refined analysis  of the structure of the orbits is possible.    Also, it should be possible to take into account systematically the ``loops" and ``collision" between orbits.  This should lead to a renormalised diffusion constant different from the one that we obtain, which is directly proportional to the scatterers density.  The distribution of  the set of orbits could also be analysed by using a connection with random loops models appearing in the context of quantum spin systems, see the paper by Ueltschi \cite{Ueltschi}.  The results of Lacoin \cite{Lacoin} on random adjacent transpositions may also provide interesting results for our our model.

In section 2, we define  the dynamics of our model and state our main result. In section 3, we relate the current to the number of  orbits crossing the system.  In section 4 we prove some results on lazy random walks that are the keys of our analysis.  Basically, we first show  that in high dimension it is more unlikely for a (lazy) random walk to make a ``loop" of length $N$ than to cross a system of size $N$.  We also give an estimate on the probability that two (lazy) random walks intersect each other before exiting the system. Section 5 is devoted to the connection between the orbits of our dynamics and lazy random walks.  In section 6, we establish a law of large numbers for the number of orbits that crosses the system by using the estimates of section 5.
In the final section, we put the different parts together and prove our main result : Fick's law as a weak law of large numbers in the size of the system and the exponential approach to stationarity.

\section{Definition of the dynamics and main result}

We first define the $d$-dimensional version of the rings model of \cite{Lefevere}. 
Let us consider the $d$-dimensional box :
$$
\Lambda=\{1,\ldots,N\}^d=\{i=(i_1,\ldots,i_d), i_l\in \{1,\ldots,N\},\;1\leq l\leq d\}.
$$
To each site of $i\in\Lambda$, we attach a ring $\cR_i$ carrying $N$ sites $k\in\{1,\ldots,N\}$.
The model consists of particles moving on 
$$
\cC=\prod_{i\in\Lambda}\cR_i=\{(k,i):k\in\{1,\ldots,N\},\;i\in\Lambda\}.
$$
Rings are periodic : in the following, addition and substraction on the first component of points of $\cC$ are to be understood modulo $N$. We also impose periodic boundary conditions on the $d-1$ first components of points of $\Lambda$. Thus,  addition and substraction on those components  are also to be understood modulo $N$.  We define also $\Lambda_n=\{i'\in\bbZ^d: \exists i\in\Lambda, |i_d-i'_d|\leq n\}$. The distance between two points $i=(i_1,\ldots,i_d)$ and $i'=(i_1',\ldots,i_d')$ in $\Lambda_2$ is defined as follows : 
\begin{equation}
d(i,i')=\inf_{\substack{
j\in\bbZ^d\\ j_d=0}}\|i-i'+jN\|
\label{distance}
\end{equation}
where $\|i\|=\sum_{l=1}^d|i_l|$.
In the following, we shall refer to the first component of $x=(k,i)$ as the {\it vertical} component and to the components $i=(i_1,\ldots,i_d)$ as the {\it horizontal} ones.  We define boundaries of the system :
\begin{equation}
B_{-}=\{(k,i)\in\cC: i_d=1\}\quad {\rm and}\quad B_{+}=\{(k,i)\in\cC: i_d=N\} .
\end{equation}
$B=B_-\cup B_+$.
We denote by $(e_1,\ldots,e_d)$ the canonical basis of $\bbR^d$.
The second ingredient of the model is the presence of scatterers that are located in-between pairs of nearest-neighbours of the form $(k,i)$ and $(k,j)$ with $d(i,j)=1$. We define variables $\xi(k,ij)$ taking values in $\{0,1\}$ such that  $\xi(k,ij)=1$ if and only if there is a scatterer between sites $(k,i)$ and $(k,j)$, with $d(i,j)=1$. We use the notation :
$$
\xi=\{\xi(k,ij): k\in\{1,\ldots,N\}, i,j\in\Lambda_2, d(i,j)=1\}.
$$
Throughout the paper, we will set
\begin{equation}
\xi(k,ij)=1,\quad {\rm if }\;i\;{\rm or}\; j\notin \Lambda_2.
\label{xicon}
\end{equation}
The dynamics of the model is defined by the following dynamical system $F:\cC\to\cC$ :  for any $(k,i)\in \cC$,
\begin{equation}
F(k,i)=\sum_{j:d(i,j)=1}c(k,ij)(k+1,j)+(k+1,i)\prod_{j:d(i,j)=1}(1-c(k,ij))
\label{tau_def}
\end{equation}
where the sum and product run over $j\in\Lambda_1$ and
\begin{equation}
c(k,ij)=\xi(k,ij)\prod_{
\substack{
 l:d(i,l)=1\\
l\neq j
}
}
(1-\xi(k,il))
\prod_{
\substack{
 l:d(j,l)=1\\
l\neq i
}
}
(1-\xi(k,jl)).
\label{c_def}
\end{equation}
where the product runs over $l\in\Lambda_2$.
The condition (\ref{xicon}) ensures that $F$ is well defined from to $\cC$ to $\cC$.
We define the orbit of a point of $\cC$ and its period by 
\begin{equation}
{\mathcal O}(x)=\{y\in\cC: \exists t\geq 0,F^t(x)=y\}
\label{orbit}
\end{equation}
and
\begin{equation}
T(x)=\inf\{t\geq 0: F^t(x)=x\}.
\label{period}
\end{equation}

We gather here the following useful facts about the dynamical system.

\begin{lemma} 

\begin{enumerate}
\item $F$ is a well-defined bijective map from $\cC$ into $\cC$,
\item For every $x\in\cC$, ${\mathcal O}(x)$ is a loop :  $T(x)\leq |\cC|=N^{d+1}$,
\item Every orbit is self-avoiding  : for any $y\in {\mathcal O}(x)$, and $\forall t< T(x)$, $F^t(y)\neq y$,
\item Orbits are non-intersecting : if $ y\notin {\mathcal O}(x)$, then ${\mathcal O}(x)\cap{\mathcal O}(y)=\emptyset$.
\label{taumap}
\end{enumerate}
\end{lemma}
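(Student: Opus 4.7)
The plan is to deduce parts (2)--(4) from part (1) by elementary facts about cycles of permutations of finite sets, so the bulk of the work is concentrated in (1); the main obstacle is the single-occupancy observation for the coefficients $c(k,ij)$ together with the symmetry $c(k,ij)=c(k,ji)$, which once made explicit reduce the rest to routine bookkeeping.

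For well-definedness in (1), the first task is to check that $F(k,i)$ is a single point of $\cC$, not a formal linear combination. The key observation is that the coefficients $c(k,ij)$ in the defining sum are mutually exclusive: if $c(k,ij)=1$, then $\xi(k,ij)=1$, and for every other neighbour $l$ of $i$ the definition of $c(k,il)$ contains the factor $(1-\xi(k,ij))$ (picked up from its first product, which runs over neighbours of $i$ distinct from $l$), so this factor vanishes. Hence at most one $c(k,ij)$ equals $1$; when one does, the product $\prod_{j}(1-c(k,ij))$ attached to the diagonal term is zero, while when none does, that product equals $1$ and only the diagonal term survives. The boundary convention (\ref{xicon}) ensures that the relevant neighbours lie in $\Lambda_1$, so $F(k,i)\in\cC$.

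For bijectivity, I would exploit the symmetry $c(k,ij)=c(k,ji)$, which is immediate from $\xi(k,ij)=\xi(k,ji)$ together with the symmetric form of the two products in the definition of $c$. This allows me to write down an explicit inverse: given $(k+1,j)\in\cC$, look for a neighbour $i$ of $j$ with $c(k,ji)=1$; by the same single-occupancy argument there is at most one such $i$. Set $G(k+1,j)=(k,i)$ if it exists, and $G(k+1,j)=(k,j)$ otherwise. One then verifies $G\circ F=\mathrm{id}$ by distinguishing the two cases in which $(k,i)$ is scattered versus moves straight up. Alternatively, since $|\cC|<\infty$, injectivity suffices and follows from the same case analysis after noting that the vertical component increments deterministically.

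Part (2) is then automatic: iterates of a bijection of a finite set eventually return to the starting point, and the cycle length is bounded by $|\cC|=N^{d+1}$. For (3), if some $y=F^s(x)\in\mathcal{O}(x)$ satisfied $F^t(y)=y$ with $0<t<T(x)$, then injectivity applied $s$ times would give $F^t(x)=x$, contradicting the minimality of $T(x)$. For (4), if $z\in\mathcal{O}(x)\cap\mathcal{O}(y)$, write $z=F^a(x)=F^b(y)$ with $a,b\geq 0$; applying $F^{T(x)-a}$ gives $x=F^{T(x)-a+b}(y)\in\mathcal{O}(y)$, whence all further forward iterates of $x$ lie in $\mathcal{O}(y)$, yielding $\mathcal{O}(x)\subseteq\mathcal{O}(y)$. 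A symmetric argument gives equality, contradicting $y\notin\mathcal{O}(x)$.
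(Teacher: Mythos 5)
The paper states this lemma without proof, so there is no argument of the author's to compare yours against; judged on its own terms, your proposal is structurally sound and the two facts you isolate really do carry the whole lemma: the single-occupancy property (for fixed $(k,i)$ at most one $c(k,ij)$ equals $1$, since $c(k,ij)=1$ forces $\xi(k,il)=0$ and hence $c(k,il)=0$ for every other neighbour $l$ of $i$) and the symmetry $c(k,ij)=c(k,ji)$, which hands you the inverse map. The deductions of (2)--(4) from injectivity of a self-map of a finite set are complete and correct: pure periodicity of every forward orbit, self-avoidance from the minimality of $T(x)$, and disjointness of distinct cycles.

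The one step you assert rather than prove is the containment $F(\cC)\subseteq\cC$, and the justification you give for it does not work: the sum in (\ref{tau_def}) runs over $j\in\Lambda_1$ by definition, not as a consequence of (\ref{xicon}), and $j\in\Lambda_1$ does not imply $(k+1,j)\in\cC$, because $\Lambda_1$ strictly contains $\Lambda$ (it includes the layers $i_d=0$ and $i_d=N+1$). What a complete proof must check is that $c(k,i(i+e_d))=0$ whenever $i_d=N$, and symmetrically $c(k,i(i-e_d))=0$ whenever $i_d=1$, so that the surviving term of (\ref{tau_def}) always has $j\in\Lambda$; this is the only place where the convention (\ref{xicon}) can enter, namely by forcing some factor $(1-\xi(k,jl))$ in the second product of (\ref{c_def}) to vanish because $j$ or $l$ falls outside the allowed region. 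On a literal reading of the definitions this is actually delicate (for $j=i+e_d$ with $j_d=N+1$, all neighbours $l$ of $j$ have $l_d\leq N+2$ and so still lie in $\Lambda_2$, so no factor is visibly forced to zero; the convention has to be understood as killing edges one layer closer in for the containment to hold). Since the paper itself only asserts ``the condition (\ref{xicon}) ensures that $F$ is well defined'' without verification, I would not call your proof wrong, but this boundary check is precisely the nontrivial content of well-definedness, and the same check is needed for your inverse $G$ (the unique $i$ with $c(k,ji)=1$ must be shown to lie in $\Lambda$). Everything else stands.
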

{\bf Remark} From $\cC$, we could build an extended phase space $\hat \cC$ to include a ``velocity" degree of freedom $\xi\in\{-1,1\}$.  We define a dynamical system $\hat F$ on $\hat\cC$ as follows.  We write any point of the new phase space $\hat\cC$ like $(x,\xi)$.
$$
\hat F(x,\xi)=\left\{\begin{array}{ll}
(F(x),\xi),\quad \xi=1\\
(F^{-1}(x),\xi),\quad \xi=-1\\
\end{array}
\right.
$$ 
It is easy to see that $\hat F$ has the same properties on $\hat\cC$ than the ones stated in Lemma \ref{taumap}. Moreover, $\hat F$ is reversible in the same sense than Hamiltonian dynamics is. Namely, if the reversal of velocity $\Pi$ is defined as $\Pi(x,\xi)=(x,-\xi)$, then 
$$
\hat F^{-1}=\Pi \hat F\Pi.
$$
In the sequel, to keep notations simple, we work only with the dynamical system $F$.
\vspace{2mm}

We put particles at the sites of $\cC$ in a such a way that a site carries at most one particle.  If this property is true at time zero, then it remains true for all time. 
This is not essential however, and one could allow more than one particle per site but it simplifies a bit the set-up.
 A variable $\sigma(x;t)=\sigma(k,i;t)\in\{0,1\}$ describes the state of occupation of the site $x=(k,i)\in\cC$ at time $t$.  $\sigma(\cdot;t)\in\{0,1\}^{\cC}$ denotes the configuration of occupation variables at time $t$.  The motion of a single particle is described as follows  :  if the particle is located at $(k,i)$ at time $t$, then at time $t+1$,  it moves according to the  dynamical system $F$, namely, it jumps to $F(k,i)$.
In other words, a particle located at site  $(k,i)$, i.e. at site $k$ on the ring $\cR_i$  will jump to site $k+1$ on ring $\cR_{j}$ (with $d(i,j)=1$), if and only if the following conditions are simultaneously satisfied :
\begin{enumerate}
\item There is  a scatterer between $(k,i)$ and $(k,j)$, namely $\xi(k,ij)=1$,
\item There are no other scatterers around that pair. 
\end{enumerate}
\begin{figure}[thb]
\includegraphics[width = .89\textwidth]{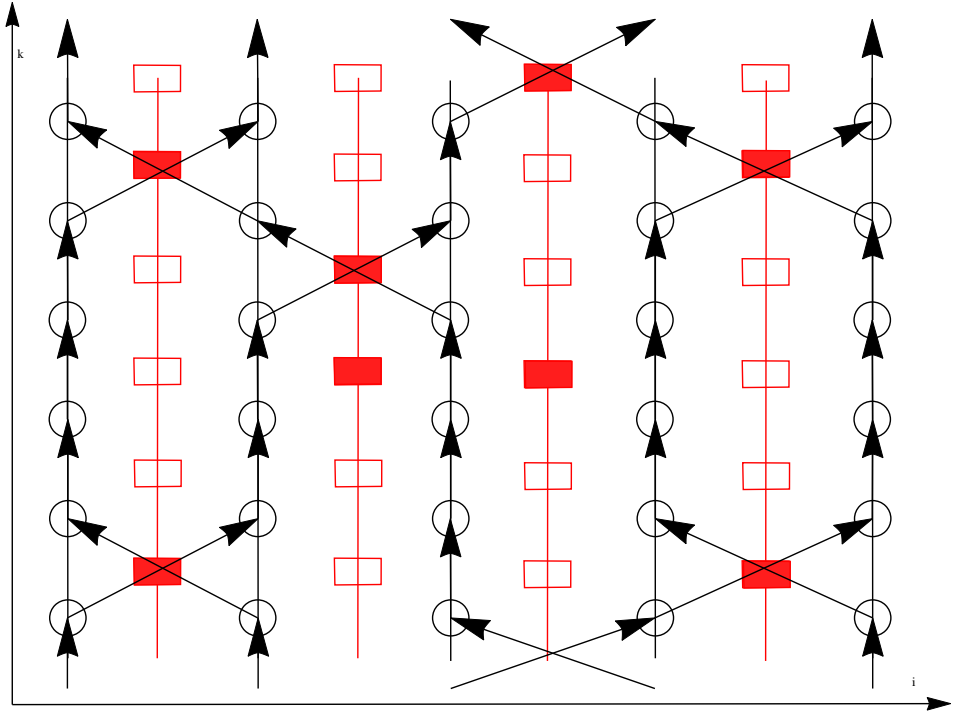} 
\caption{The dynamical system in 1D, the $k$ index corresponds to the vertical coordinate and the $i$ index corresponds to the horizontal one. Periodic boundary conditions are imposed on the vertical direction.}
\end{figure}
\begin{figure}[thb]
\includegraphics[width = .89\textwidth]{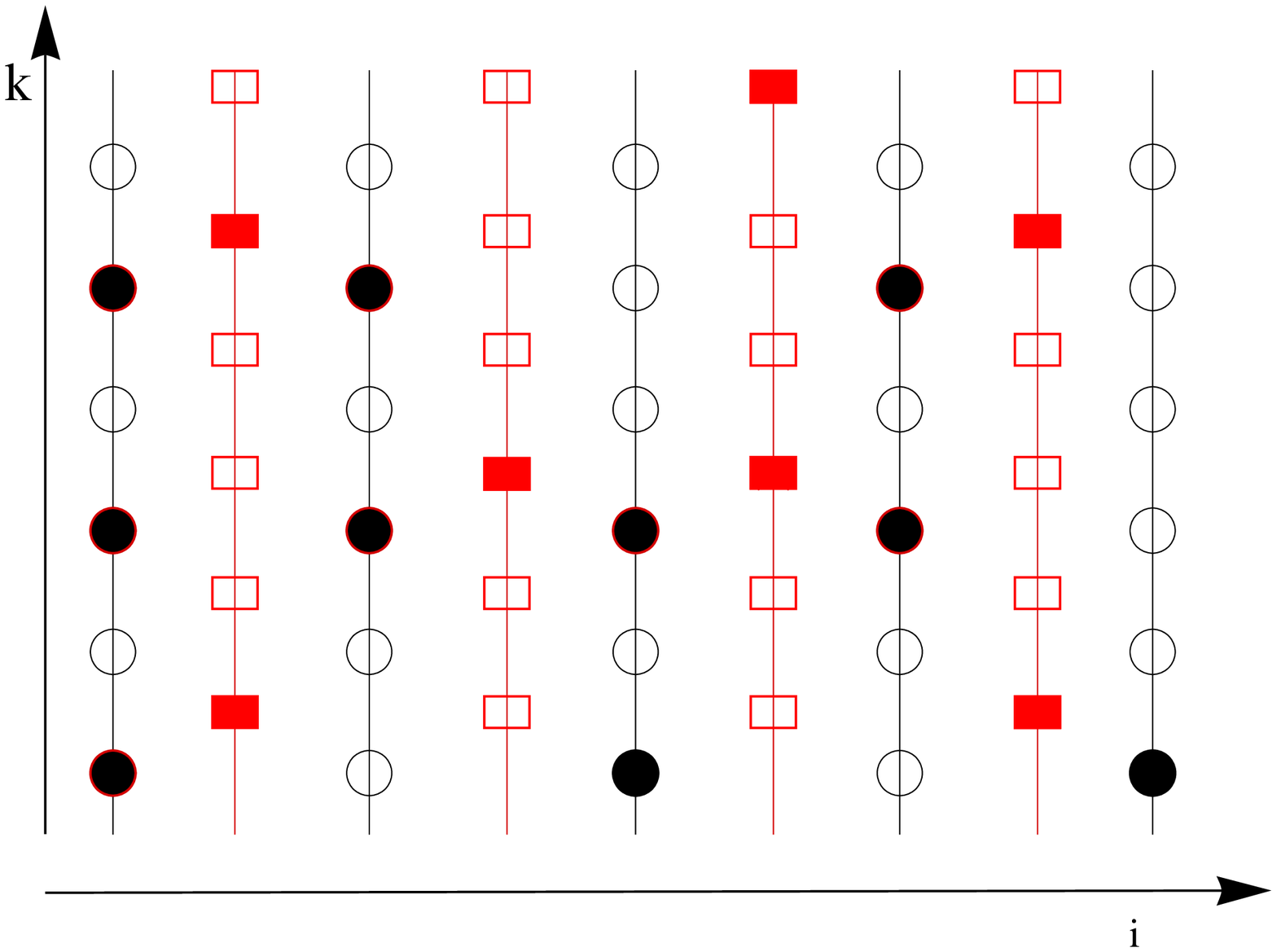} 
\caption{A configuration of particles (black disks) and scatterers (rectangles) on five rings. Periodic boundary conditions are imposed on the vertical direction.}
\end{figure}

In every other case, the particle located at site $(k,i)$ simply moves, upward in the vertical direction, to $(k+1,i)$.
On top of the dynamics induced by the dynamical system $F$ on the occupation variables, we add a stochastic update of the variables located on the boundaries of the system $B_-$ and $B_+$.  This models the coupling of the system to reservoirs of particles at chemical potentials $\rho_-$ and $\rho_+$.
We obtain the following dynamics : given $\sigma(\cdot;t-1)$, we define $\sigma(\cdot;t)$ for all $t\in\bbN^*$ by
$$
\sigma(x;t)=\left\{
\begin{array}{lll}
\sigma(F^{-1}(x);t-1)\quad {\rm if} \quad x\notin B_{-}\cup B_{+}\\
\\
\sigma^-_{x}(t-1)\quad {\rm if}  \quad x\in B_{-}\\
\\
\sigma^+_{x}(t-1)\quad {\rm if} \quad  x\in B_{+}
\end{array}
\right.
$$
The families of random variables $\{\sigma^-_{x}(t):x\in B_{-},\, t\in\bbN\}$ and $\{\sigma^+_{x}(t):  x\in B_{+},\, t\in\bbN\}$  consist of independent Bernoulli variables with respective parameters $\rho_-$ and $\rho_+$.

We define the current of particles at time $t\in\bbN$ between hyperplanes $\cC^l=\{x\in\cC: i_d=l\}$ and $\cC^{l+1}$, $l\in\{1,\ldots,N-1\}$ :
\begin{equation}
J(l,t)=\frac 1 {N^d} \sum_{(k,i)\in\cC^l} c(k,i(i+e_d))(\sigma(k,i;t)-\sigma(k,i+e_d;t)),
\label{current}
\end{equation}
$c(k,ij)$ was defined in (\ref{c_def}) and $(e_1,\ldots,e_d)$ is the canonical basis of $\bbR^d$.
\newpage
We are now ready to state our main result.

\begin{theorem}
\label{mainresult}
Let  $d\geq 7$, $\rho_I,\rho_+,\rho_-\in (0,1)$ and  $\xi$ a family of Bernoulli random variables of parameter $\mu$ and $\{\sigma(x;0): x\in \cC\}$ be a set of independent Bernoulli random variables with $\bbE[\sigma(x;0)]=\rho_-$ if $x\in B_{-}$, $\bbE[\sigma(x;0)]=\rho_+$  if $x\in B_{+}$, and $\bbE[\sigma(x;0)]=\rho_I$ if $x\notin B_{-}\cup B_{+}$.  
\begin{enumerate}
\item For any $N\in\bbN^*$ and any $t\geq \overline t=N^{d+1}$,  $J(l,t)=J(l,\overline t):=\overline J(l)$, the equality holds in law. 
\item For any $\delta>0$ and any $l\in{1,\ldots,N-1}$,
\begin{equation}
\lim_{N\to\infty}\bbP[\left|N\bar J(l  )-\kappa(\mu)(\rho_--\rho_+)\right|>\delta]=0,
\label{Fickstationary}
\end{equation}
where $\kappa(\mu)=\mu(1-\mu)^{4d-2}$.
\item There exist  random variables $\{L(l,t):l\in\{1,\ldots,N-1\},t>0\}$ such that for every $\delta>0$, $\varepsilon>0$, $t>0$ and $l\in{1,\ldots,N-1}$,
\begin{equation}
\lim_{N\to\infty}\bbP[\left|N(J(l  ,t N^2)+L(l, t N^2))-\kappa(\mu)(\rho_--\rho_+)\right|>\delta]=0.
\label{Fickmain}
\end{equation}
and $L(l,t)$  satisfies :

\begin{equation}
\bbP[|L(l, t N^2)|>\varepsilon]\leq C \varepsilon^{-1}\exp-\kappa(\mu) t+O(\frac{1}{N^{\frac{5}{2}}}),
\label{Fickbismain}
\end{equation}
Moreover, for any $l$ and  for any $t\geq N^{d+1}$, $L(l,t)=0$.
\end{enumerate}
\end{theorem}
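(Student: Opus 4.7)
The strategy is to express the occupation $\sigma(x;t)$ at any site via the backward orbit through $x$. By Lemma \ref{taumap}, the orbit through $x$ is a simple loop of period at most $N^{d+1}$, so either (a) the backward half-trajectory $F^{-s}(x)$, $s=1,\ldots,t$, first hits $B = B_- \cup B_+$ at some time $s \leq t$, in which case $\sigma(x;t)$ equals the reservoir Bernoulli $\sigma^{\pm}_{F^{-s}(x)}(t-s)$ of parameter $\rho_\pm$; or (b) the backward half-trajectory stays in the bulk throughout, in which case $\sigma(x;t) = \sigma(F^{-t}(x);0)$ is an initial Bernoulli of parameter $\rho_I$. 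Inserted into (\ref{current}), this rewrites $J(l,t)$ as a sum over active edges $(k,i)\to(k,i+e_d)$ with $c(k,i(i+e_d))=1$ of signed differences of Bernoullis whose parameters are fixed by where the two backward orbits end up.

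For part (1), once $t \geq \overline t = N^{d+1}$ each backward orbit has completed a full period inside the window $[t-\overline t, t]$, so scenario (a) is decided entirely by reservoir Bernoullis in that window and scenario (b) (closed bulk orbit) is a function of the initial conditions that is $\overline t$-periodic in $t$. The i.i.d. structure of the reservoir Bernoullis across time then makes the joint law of $\{\sigma(x;t): x\in\cC^l\cup\cC^{l+1}\}$ translation-invariant for $t\geq\overline t$, yielding $J(l,t)\stackrel{\text{law}}{=}\overline J(l)$.

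Parts (2) and (3) rest on the coupling developed in Sections 3--6: the backward orbit from a bulk site is well approximated in law by a lazy random walk that moves vertically by $+1$ on its ring and jumps horizontally in each of the $2d$ directions with probability $\kappa(\mu)=\mu(1-\mu)^{4d-2}$ (the exponent matching the $4d-1$ scatterer constraints in (\ref{c_def})). Under this coupling the dominant contribution to $\overline J(l)$ comes from active edges whose two backward orbits exit through \emph{different} boundaries, each contributing $\rho_--\rho_+$ in mean; edges with matching boundaries average out, and contributions from closed bulk orbits are subleading by Section~4's exit-time estimates in high dimension. The law of large numbers of Section~6 gives the density of cross-boundary active edges at hyperplane $l$ as $\kappa(\mu)/N+o(1/N)$, proving (\ref{Fickstationary}). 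For (3), define $L(l,tN^2):=\overline J(l)-J(l,tN^2)$; this is controlled by edges whose backward orbit has not yet exited $B$ by time $tN^2$. Under diffusive scaling, the lazy-walk exit-time estimate of Section~4 bounds this failure probability by $\lesssim e^{-\kappa(\mu)t}$, and Markov's inequality delivers (\ref{Fickbismain}). Since every orbit exits within $N^{d+1}$ steps, $L(l,t)=0$ for $t\geq N^{d+1}$.

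The main technical obstacle is the coupling itself. Orbits are self-avoiding and pairwise non-intersecting (Lemma \ref{taumap}), whereas lazy random walks satisfy neither property, so the coupling error is governed by the probabilities that two walks form a short loop or intersect one another. In dimension $d\geq 7$ the standard random-walk intersection and loop estimates decay fast enough in $N$ to be absorbed into the $O(N^{-5/2})$ correction in (\ref{Fickbismain}). Making these bounds uniform on the diffusive time scale $tN^2$ and across all $N^d$ pairs of backward orbits is where the bulk of the analytic work resides.
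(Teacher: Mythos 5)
Your overall architecture --- express $\sigma(x;t)$ through the backward orbit, couple orbits to lazy random walks with jump rate $\kappa(\mu)=\mu(1-\mu)^{4d-2}$, and invoke the high-dimensional loop/intersection and exit-time estimates --- is the paper's, and your argument for part (1) matches the paper's. But two specific steps in your plan fail as stated. The definition $L(l,tN^2):=\overline J(l)-J(l,tN^2)$ cannot satisfy the theorem: it makes (\ref{Fickmain}) a tautological restatement of (\ref{Fickstationary}), and the final clause breaks down, since for $t\geq N^{d+1}$ the variables $\overline J(l)=J(l,\overline t)$ and $J(l,t)$ are equal only \emph{in law} --- they are built from reservoir Bernoullis injected in different time windows, so their difference is a nondegenerate random variable, not $0$. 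The paper's $L(l,t)$ is instead the $\xi$-measurable discrepancy $\bbE[J(l,t)\mid\xi]-\frac{\cN}{N^d}(\rho_--\rho_+)$ of Proposition \ref{Pcrossing}; it is bounded by the number $\cN_\pm(l,t)$ of boundary excursions still straddling the slice after time $t$, and this vanishes identically for $t\geq N^{d+1}$ because no excursion is longer than $|\cC|$.

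Second, your edge-by-edge accounting of the mean current is incomplete. Besides the ``good'' edges (backward orbits exiting through $B_-$ and $B_+$ respectively, in the right order) there are edges contributing $\rho_+-\rho_-$, $\rho_\pm-\rho_I$ or $\rho_I-\rho_\pm$, and the assertion that these ``average out'' is precisely what must be proved; it is false edge by edge and only emerges after telescoping the alternating $\pm1$ signs along each orbit. The paper handles this by partitioning $\cC$ into excursions and internal orbits (equation (\ref{decomporb})): successive nonzero crossings of the slice alternate in sign, so each left-to-right crossing excursion contributes exactly $+\rho_-$ and each right-to-left one exactly $-\rho_+$, yielding the exact identity behind (\ref{Fickstationary}). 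You also leave implicit the two-layer probabilistic structure: Hoeffding concentration of $J(l,t)$ around its conditional mean given $\xi$ (a sum of $2N^d$ independent Bernoullis), and then the variance bound on $\cN/N^{d-1}$ over the scatterer randomness, which requires splitting pairs of starting points into near and far, a negative-correlation inequality for the non-intersection events, and the $O(N^{-9/4})$ two-walk intersection estimate; your closing paragraph gestures at this but does not supply the mechanism that closes the sum over pairs.
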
 

\noindent{\bf Remark.} Throughout the paper, we use the generic notation $C$ and $c$ for constants that depend only on the dimension of the system.  Their value may change from one line to the next.

\begin{figure}[thb]
\includegraphics[width = .89\textwidth]{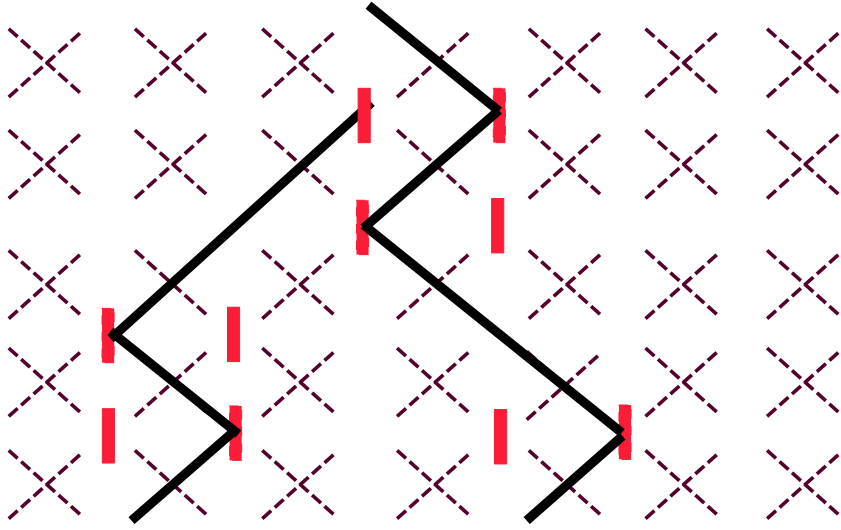} 
\caption{The dynamics is similar to the motion of particles in the mirrors model (with periodic boundary conditions in the vertical direction) in this figure.  Particles move at speed one on the edges of the dashed lattice and get reflected by the mirrors.  All particles start their motion at the vertices marked by the crosses.  Their presence at the crossed vertices is recorded every two units of time. Mirrors always appear in pairs. Pairs can overlap. The {\it presence} of a pair of mirrors corresponds to the {\it absence} of scatterers in our model.}
\end{figure}
\section{Current of particles and number of crossings}

The goal of this section is to provide a relation between the current and the number of crossing orbits from $B_-$ to  $B_+$ induced by $F$.  This relation holds for {\it fixed}  configurations of scatterers $\xi$. The only randomness that appears in the context of this section comes from the initial distribution of particles and the injection of particles at the boundaries.

We first observe that
\begin{equation}
J(l,t)=\frac 1 {N^d}\sum_{x\in\cC}\sigma(x;t)\Delta(x,l)
\label{cud}
\end{equation}
where
\begin{equation}
\Delta(x,l)={\bf 1}_{x\in\cC^l,F(x)\in\cC^{l+1}}-{\bf 1}_{x\in\cC^{l+1}, F(x)\in \cC^l}.
\label{Delta}
\end{equation}
$\Delta(x,l)$ takes the value $+1$ (resp. $-1$), if following the orbit in which it is included, $x$ crosses the slice $\cC^l\cup\cC^{l+1}$ from left to right (resp. from right to left).
For any $x\in\cC$, we define the exit time:
\begin{equation}
t_B (x)=\inf\{t>0: F^t(x)\in B_{-}\cup B_{+}\}.
\label{Exit_time}
\end{equation}
Next, we need to define {\it excursions}.  For any $x\in B$ such that $t_B(x)>1$, we define the {\it excursion} $\cE(x)=\{x,F(x),\ldots,F^{t_B(x)-1}(x)\}$. If $t_B(x)=1$, then $\cE(x)=\{x\}$. The set of excursions included in an orbit is a partition of the orbit. Depending on where the excursions start and finish we call those excursions {\it left-to-right} crossings, {\it right-to-left} crossings, {\it left-to-left} paths and {\it right-to-right} paths. An {\it internal} orbit $\cO$ is an orbit such that $\cO\cap B=\emptyset$.  The collection of all excursions and  internal orbits forms a partition of $\cC$ and there is a one-to-one correspondence between the set of points of $B_-\cup B_+$ and the set of all excursions, i.e. each point of $B_-\cup B_+$  belongs to exactly one excursion and each excursion contains at most one point of $B_-\cup B_+$. Thus (\ref{cud}) may be written :
\begin{equation}
J(l,t)=\frac 1 {N^d}\left(\sum_{\cO\cap B=\emptyset} J(l,t;\cO)+\sum_{x\in B} J(l,t,\cE(x))\right).
\label{decomporb}
\end{equation}
where
\begin{eqnarray}
J(l,t,\cO)&=&\sum_{x\in\cO}\sigma(x;t)\Delta(x,l)\\
&=& \sum_{n=0}^{T(y)}\sigma(F^n(y);t)\Delta(F^n(y),l)
\label{jorb}
\end{eqnarray}
for any $y\in \cO$ and
\begin{eqnarray}
J(l,t;\cE(x))&=&\sum_{y\in\cE(x)}\sigma(y;t)\Delta(y,l)\\
&=&\sum_{n=0}^{t_B(x)-1}\sigma(F^n(x);t)\Delta(F^n(x),l).
\label{dexccu}
\end{eqnarray}

We denote 
by $\cN_{\pm}$ the numbers of crossings from $B_{\pm}$ to $B_{\mp}$ induced by $F$, i.e. $\cN_{\pm}=|S_\pm|$ where $S_\pm$ is given by
$$
S_\pm=\{x \in B_{\pm}: F^1(x)\notin B_{\pm},\ldots, F^{s-1}(x)\notin B_{\pm}, F^s(x)\in B_{\mp} \textrm{ for some }s\in\N^*\}.
$$
One notes that $\cN_+=\cN_-$.  Indeed, since every orbit is closed, it must contain as many left-to-right than right-to-left crossings.  Thus, we set $\cN= \cN_+=\cN_-$.
  We define also the set of points of the boundaries $B_-$ and $B_+$ that are the starting points of excursions that  cross $\cC^l\times \cC^{l+1}$ at least once after time $t$ :
\begin{equation}
S_{\pm}(l,t)=\{x \in B_{\pm}:\exists s,\;t < s < t_B(x),\; \Delta(F^s(x),l)\neq 0 \}
\label{Slt}
\end{equation}
and we use the notation $\cN_{\pm}(l,t)=|S_\pm(l,t)|$. 
\begin{prop}
\label{Pcrossing}
Let  $\{\sigma(x;0): x\in \cC\}$ be a set of independent Bernoulli random variables with $\bbE[\sigma(x;0)]=\rho_\pm\in (0,1)$ if $x\in B_{\pm}$, and $\bbE[\sigma(x;0)]=\rho_I\in (0,1)$ if $x\notin B_{-}\cup B_{+}$. Then, for every $\delta>0$, every $\xi$ and every $t\in\bbN^*$,
$$
\bbP\left[\left|J(l,t)-\frac{{\mathcal N}}{N^d}(\rho_--\rho_+)+L(l,t)\right|\geq \delta \right]\leq 2\exp(-\delta^2N^d), l\in{1,\ldots,N-1}.
$$
and $L(l,t)$
satisfies
$$
| L(l,t)|\leq \frac{3}{N^d}(\cN_-(l,t)(\rho_-+\rho_I)+\cN_+(l,t)(\rho_++\rho_I)).
$$
If $t\geq N^{d+1}$, $L(l,t)=0$.
\end{prop}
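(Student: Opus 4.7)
The plan is to combine a Hoeffding-type concentration bound for $J(l,t)$ with an orbit-by-orbit computation of $\bbE[J(l,t)]$ via the excursion decomposition (\ref{decomporb}). The first observation is that the whole family $\{\sigma(x;t):x\in\cC\}$ consists of independent Bernoulli random variables: iterating the update rule backwards, every $\sigma(x;t)$ equals either $\sigma(F^{-t}(x);0)$ — when the backward trajectory of $x$ never visits $B$ between times $0$ and $t$ — or an injection variable $\sigma^\pm_y(s)$ for a unique $(y,s)$ with $y\in B_\pm$ and $s\in\{0,\ldots,t-1\}$, and bijectivity of $F$ from Lemma~\ref{taumap} guarantees that distinct sites at time $t$ correspond to distinct underlying sources. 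Writing $N^d J(l,t)=\sum_{x\in\cC^l\cup\cC^{l+1}}\sigma(x;t)\Delta(x,l)$ as a sum of at most $2N^d$ independent terms of absolute value at most $1$, Hoeffding's inequality then yields $\bbP[|J(l,t)-\bbE[J(l,t)]|\geq\delta]\leq 2\exp(-\delta^2 N^d)$.

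Next, I compute $\bbE[J(l,t)]$ orbit by orbit. For an internal orbit $\cO$, the trace-back gives $\sigma(x;t)\sim\mathrm{Bern}(\rho_I)$, so $\bbE[J(l,t,\cO)]=\rho_I\sum_{x\in\cO}\Delta(x,l)=0$ since any closed loop crosses the slice $\cC^l\cup\cC^{l+1}$ as many times left-to-right as right-to-left. For an excursion $\cE(x)$ with $x\in B_-$, the trace-back gives $\sigma(F^n(x);t)\sim\mathrm{Bern}(\rho_-)$ when $n\leq t$ (backward orbit reaches the boundary injection) and $\sim\mathrm{Bern}(\rho_I)$ when $n>t$ (backward orbit stops at an interior time-$0$ point), which yields
\begin{equation*}
\bbE[J(l,t,\cE(x))]=\rho_-\sum_{n=0}^{t_B(x)-1}\Delta(F^n(x),l)+(\rho_I-\rho_-)\sum_{n=t+1}^{t_B(x)-1}\Delta(F^n(x),l).
\end{equation*}
The first sum is the net signed crossing of the excursion ($+1$ for a left-to-right crossing, $0$ for a left-to-left path); summed over $x\in B_-$ it contributes $\rho_-\cN_-$, and the analogous argument on $B_+$ contributes $-\rho_+\cN_+$, so using $\cN=\cN_-=\cN_+$ the leading part of $N^d\bbE[J(l,t)]$ is $\cN(\rho_--\rho_+)$.

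Setting $L(l,t):=N^{-d}\cN(\rho_--\rho_+)-\bbE[J(l,t)]$ then turns the concentration bound above into the claimed probability estimate. It remains to bound $L$: the correction collects the residual terms $(\rho_I-\rho_\pm)\sum_{n=t+1}^{t_B(x)-1}\Delta(F^n(x),l)$ coming from incomplete excursions. The inner sum is bounded in absolute value by $1$, because the signed crossings of a single hyperplane by any path segment lie in $\{-1,0,+1\}$, and it vanishes unless $x\in S_\pm(l,t)$; combined with $|\rho_I-\rho_\pm|\leq\rho_I+\rho_\pm$ this already gives the stated inequality (with constant $1$ in place of $3$). Finally, for $t\geq N^{d+1}$, Lemma~\ref{taumap}(2) forces $t_B(x)\leq T(x)\leq N^{d+1}\leq t$, so the index range $n\in(t,t_B(x)-1)$ is empty for every excursion and $L(l,t)\equiv 0$. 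The delicate point of the whole argument is the independence assertion of the first step: the bijectivity of $F$ together with the fresh sampling of boundary injections at each positive time are both essential to prevent distinct sites at time $t$ from tracing back to the same underlying Bernoulli variable.
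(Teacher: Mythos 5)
Your proposal is correct and follows essentially the same route as the paper: independence of $\{\sigma(x;t):x\in\cC\}$ plus Hoeffding's inequality for the concentration step, then the internal-orbit/excursion decomposition (\ref{decomporb}) for computing $\bbE[J(l,t)]$. Your only deviation is the algebraic rearrangement of the excursion expectation that extracts the full net signed crossing $\sum_{n=0}^{t_B(x)-1}\Delta(F^n(x),l)$ at once rather than treating the four excursion types case by case; this is a mild streamlining (it even gives the bound on $|L(l,t)|$ with constant $1$ in place of $3$), not a genuinely different argument.
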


\begin{proof} 
The fact that  $L(l,t)=0$, when $t\geq N^{d+1}$ follows from the definition of  $\cN_\pm(l,t)$ and the fact  that, since the dynamics is injective, the length of any orbit is bounded by the number of sites in $\cC$, namely $N^{d+1}$.
From the definition of $\sigma(\cdot;t)$ in terms of $\sigma(\cdot;t-1)$, it follows by induction that at any given time $t\in\N$, the random variables $\{\sigma(x;t): x\in \cC\}$ are independent. Therefore, $N^dJ(l,t)$ is just a sum of $2N^d$  independent Bernoulli random variables, each of which appears with a deterministic $\{-1,0,+1\}-$valued multiplier. 
Consequently, Hoeffding's concentration inequality guarantees that for all $\delta>0$, 
$$\bbP\left(\left|J(l,t)-\bbE\left[J(l,t)\right]\right|\geq \delta\right)\leq 2\exp({-\delta^2}{N^d}).$$
To conclude the proof, it remains to show that : 
\begin{equation}
\bbE\left[J(l,t)\right]=\frac{\cN}{N^d}(\rho_--\rho_+)+L(l,t).
\label{averagecurrent}
\end{equation}
Using (\ref{decomporb}), we have

\begin{equation}
\bbE[J(l,t)]=\frac 1 {N^d}\left(\sum_{\cO\cap B=\emptyset} \bbE[J(l,t;\cO)]+\sum_{x\in B} \bbE[J(l,t,\cE(x))]\right).
\label{decomporb1}
\end{equation}
And for any $y\in\cO$,
\begin{equation}
\bbE[J(l,t;\cO)]=\sum_{n=0}^{T(y)}\bbE[\sigma(F^n(y);t)]\Delta(F^n(y),l).
\end{equation}

If an orbit $\cO$ is {\it internal}, then $\bbE[J(l,t;\cO)]=0$. Indeed, for any $n$ such that $0\leq n\leq T(y)$ and $\forall t\geq 0$, $\bbE[\sigma(F^n(y);t)]=\rho_I$. Moreover, since all orbits are closed, there must be as many left-to-right than right-to-left crossings giving contribution of opposite signs and thus the sum vanishes. 
We turn now to the contribution of the second term of (\ref{decomporb1}).
For $x\in B_\pm$,
\begin{equation}
\bbE[J(l,t;\cE(x))]=\rho_\pm\sum_{n=0}^{t\wedge (t_B(x)-1)}\Delta(F^n(x),l)+\rho_I\sum_{n=t+1}^{t_B(x)-1}\Delta(F^n(x),l),
\label{exccu}
\end{equation}
with the convention that the second sum is zero if $t_B(x)\leq t+2$.  This equation follows because if $x\in B_\pm$, then for $ n\leq t\wedge (t_B(x)-1)$, $\sigma(F^n(x);t)$ is a Bernoulli random variable of parameter $\rho_\pm$  and for $n>t$, $\sigma(F^n(x);t)$ is a Bernoulli random variable of parameter $\rho_I$.  We note  that in the above sums, successive non-zero terms have opposite signs : it is impossible to cross the slice $\cC^l\cup\cC^{l+1}$ successively twice in the same direction.
Keeping this in mind,  let us examine the added contributions of each type of excursions $\cE(x)$.
\begin{itemize}

\item \textbf{left-to-left paths.} 
\noindent  In that case $t_B(x)> 1$ and both $x$ and $F^{t_B(x)}(x)$ belongs to $ B_-\cap \cO$. If $x\notin S_-(l,t)$, then the second sum in (\ref{exccu}) vanishes because it contains only zero terms. But if $x\notin S_-(l,t)$, the first one also vanishes because  it is a left-to-left path and there must be as many $+1$ terms than there are $-1$ terms.  If $x\in S_-(l,t)$, then we use the fact that both sums in (\ref{exccu}) are bounded by $1$ in absolute value, thus the total contribution of such excursions to (\ref{decomporb1}) is bounded in absolute value by $\cN_-(l,t)(\rho_-+\rho_I)$.
\item \textbf{right-to-right paths.} In that case both $x$ and $F^{t_B(x)}(x)$ belongs to 
$ B_+\cap \cO$  and $t_B(x)> 1$.  Using exactly the same arguments than for the left-to-left paths, one gets that the contribution to (\ref{decomporb1}) of the right-to-right paths is positive and bounded in absolute value by $\cN_+(l,t)(\rho_++\rho_I)$.
\item \textbf{left-to-right crossings}. In that case, $x\in B_-$ and $F^{t_B(x)}(x)\in B_+$ and thus $x$ belongs to $S_-$.  If $x\in S_-\cap S^c_-(l,t)$, then the second sum in (\ref{exccu}) vanishes because all its terms are zero.  But then, since it is a  left-to-right crossing, there must be exactly one more $+1$ term than there are $-1$ terms in the first sum.
Thus, in that case $\bbE[J(l,t;\cE(x))]=\rho_-$.  Therefore, one gets that the total contribution to (\ref{decomporb1}) of those type of crossings is equal to 
\begin{equation}
\rho_-|S_-\cap  S^c_-(l,t)|=\rho_-(\cN-|S_-\cap  S_-(l,t))|)
\label{main}
\end{equation}
and, obviously $|S_-\cap  S_-(l,t))|\leq \cN_-(l,t)$.
Next, if $x\in S_-\cap S_-(l,t)$, then we use again that both sums in (\ref{exccu}) are bounded by $1$ in absolute value and therefore the contribution of this type of excursions is bounded by $\cN_-(l,t)(\rho_-+\rho_I)$.
\item \textbf{right-to-left crossings}. This case is analogous to the previous one, except that the sign in (\ref{main}) is reversed and thus the total contribution of this type of crossings to (\ref{decomporb1}) is 
$$
-\rho_+(\cN-|S_-\cap  S_+(l,t))|).
$$

\end{itemize}
Putting all contributions together, we get (\ref{averagecurrent}) with $L(l,t)$ satisfying the bound of the proposition.
\end{proof}

\section{Lazy random walks}

\noindent 
We will see later that as long as they don't make ``loops" (to be defined later) or stay suficiently far from each other , the horizontal components of the orbits have the same law than independent lazy random walks.  This section is devoted to defining lazy random walks and studying their (self-) intersection properties in high dimension.
Let us consider  a finite box $\Lambda=\{i=(i_1,\ldots,i_d)\in\bbZ^d: \quad 1\leq i_l\leq N,\; 1\leq l\leq d\}$.
We put periodic boundary conditions on the $d-1$ first components and therefore addition on those components are to be understood modulo $N$. The distance between points in $\Lambda$ is defined as it was defined for points in $\Lambda_2$ in section 2.
The two boundaries corresponding to the sides of  the hypercube $\Lambda$ orthogonal to the $d$-th direction are :
$$
b_{-}=\{i\in\Lambda : i_d=1\}\quad{\rm and} \quad b_{+}=\{i\in\Lambda : i_d=N\}.
$$
and we define $b=b_{-}\cup b_{+}$.  We define a lazy random walk $W_t(i), t\in\bbN$ on $\Lambda$ with starting point $i\in\Lambda$ ($W_0(i)=i$) in the following way. Let $\nu\in (0,1)$ such that $\nu\leq\frac 1{2d}$, $\{W_t(i) : t\in\bbN\}$ is the Markov chain such that for $i$ and $j\in\Lambda$,
\begin{equation}
\bbP[W_{t+1}(i)=j'|W_t(i)=j]=\left\{\begin{array}{llll}
\nu & {\rm if} & d(j,j')=1\\
1-2d \nu  & {\rm if} & j=j' & j\notin b\\
1-(2d-1) \nu  & {\rm if} & j=j' & j\in b\\
0& {\rm if} & d(j,j')>1 
\label{markov1}
\end{array}
\right.
\end{equation}
We will compare this lazy random walk on $\Lambda$ to a lazy random walk on $\bbZ^d$.  The law of the latter will be denoted by $\bbP_\infty$ and it is defined in an analogous way :
\begin{equation}
\bbP_\infty[W_{t+1}(i)=j'|W_t(i)=j]=\left\{\begin{array}{llll}
\nu & {\rm if} & \|j-j'\|=1\\
1-2d \nu  & {\rm if} & j=j' \\
0& {\rm if} & \|j-j'\|>1 
\label{markov2}
\end{array}
\right.
\end{equation}
The first hitting time of the boundary $b$ starting from site $i$ is : 
\begin{eqnarray}
\tau_B(i)&=&\inf\{t\geq 0: W_t(i)\in b \}
\label{time_exit}
\end{eqnarray}
The following proposition is completely standard for simple random walks. For completeness, we give a quick proof in the case of the lazy random walks that are of interest to us.

\begin{prop}
\label{exittime}
\begin{eqnarray}
\bbE[\tau_B(i)]&\leq& C N,\quad d(i,b)=1\nonumber\\
\bbP[\tau_B(i)>t]&\leq& C \exp(-\nu\frac t {N^2}), \quad \forall i\in \Lambda
\label{exp_bound}
\end{eqnarray}
\end{prop}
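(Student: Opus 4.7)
The plan is to reduce the whole question to the one-dimensional lazy walk obtained by projecting $W_t(i)$ onto its $d$-th coordinate. Set $X_t := (W_t(i))_d$. Reading off (\ref{markov1}), $X_t$ evolves on $\{1,\ldots,N\}$ and, as long as it stays in $\{2,\ldots,N-1\}$, is exactly a lazy symmetric walk with $\bbP[X_{t+1}-X_t=\pm 1]=\nu$ and $\bbP[X_{t+1}=X_t]=1-2\nu$, independently of the first $d-1$ coordinates (whose dynamics never affects whether $W_t\in b$). Since $W_t\in b$ if and only if $X_t\in\{1,N\}$, the exit time $\tau_B(i)$ coincides with the hitting time of $\{1,N\}$ by $X$ started at $i_d$, and both claims become one-dimensional statements.

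For the first bound, I would use the standard gambler's-ruin computation. Let $\tilde X_s$ denote the non-lazy simple symmetric random walk on $\{1,\ldots,N\}$ absorbed at $\{1,N\}$, obtained by looking at $X$ only at the times it actually jumps, and let $\tilde\tau$ be its exit time. Then $\tau_B(i)$ is distributed like $\sum_{s=1}^{\tilde\tau}\eta_s$ with $\eta_s$ i.i.d.\ geometric of success probability $2\nu$, so by Wald's identity $\bbE[\tau_B(i)]=\bbE_{i_d}[\tilde\tau]/(2\nu)$. The classical formula $\bbE_k[\tilde\tau]=(k-1)(N-k)$ then gives $\bbE[\tau_B(i)]\leq (N-2)/(2\nu)$, which is $\leq CN$ in the regime $d(i,b)=1$, i.e.\ $i_d\in\{2,N-1\}$.

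For the exponential tail, the same computation yields the uniform bound $\bbE[\tau_B(i)]\leq N^2/(8\nu)$ for every $i\in\Lambda$. Markov's inequality then produces a constant $K$, depending only on the dimension, such that $\bbP[\tau_B(i)>KN^2/\nu]\leq 1/2$ uniformly in the starting point. Iterating this estimate via the Markov property at the times $jKN^2/\nu$, $j\geq 1$, gives $\bbP[\tau_B(i)>kKN^2/\nu]\leq 2^{-k}$, and re-expressing $k$ in terms of a continuous parameter $t$ yields the desired bound $\bbP[\tau_B(i)>t]\leq C\exp(-\nu t/N^2)$ after adjusting the constants.

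There is no real obstacle in this argument; everything is classical once the one-dimensional reduction is made. The only cosmetic point is that the slightly modified transition rule on $b$ in (\ref{markov1}) (with $(2d-1)\nu$ rather than $2d\nu$) plays no role, because $\tau_B$ stops the walk the first time it visits $b$, so only the interior transitions enter the analysis.
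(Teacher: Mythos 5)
Your proof is correct, but it follows a different route from the paper's. Both arguments begin with the same (correct) observation that only the $d$-th coordinate matters, that this coordinate is itself a lazy walk with jump probability $\nu$ to each side, and that the modified holding probability on $b$ is irrelevant because the walk is stopped there. From that point on, the paper solves the linear difference equation satisfied by $h(i,\lambda)=\bbE[e^{\lambda\tau_B(i)}]$ explicitly (obtaining the trigonometric formula (\ref{gen_sol}) with $\cos\omega(\lambda)=1-\tfrac{1}{2\nu}(1-e^{-\lambda})$), evaluates it at $\lambda=\nu/N^2$, and concludes by exponential Chebyshev; the first moment bound is obtained by the same machinery. You instead get the mean from the embedded gambler's-ruin walk together with Wald's identity for the geometric waiting times, and you get the exponential tail from the uniform bound $\sup_i\bbE[\tau_B(i)]\leq N^2/(8\nu)$ via Markov's inequality iterated with the Markov property at multiples of $KN^2/\nu$. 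Your version is more elementary and more robust (it needs no explicit solution of the boundary value problem and would survive perturbations of the transition rates), at the cost of less explicit constants; the paper's version produces the exact generating function, which is sharper but specific to this one-dimensional reduction. One minor bookkeeping point on your side: your constants carry a factor $1/\nu$ (e.g.\ $\bbE[\tau_B(i)]\leq(N-2)/(2\nu)$), so they depend on $\nu$ and not only on $d$; this matches what the paper's own computation yields and is harmless for the applications, but it is worth stating explicitly given the paper's convention that $C$ depends only on the dimension. Also make sure the final calibration of constants is recorded: iterating Markov with $K=\tfrac14$ gives $\bbP[\tau_B(i)>t]\leq 2\exp(-4\ln 2\cdot\nu t/N^2)\leq 2\exp(-\nu t/N^2)$, which is exactly the stated form.
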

\begin{proof}
We focus on the second inequality, as the first one may be derived by methods similar to the ones used below.
We bound uniformly $\bbE[\exp(\frac{\nu\tau_B(i)}{N^2})]$, which gives the result by the exponential Tchebychev inequality. We consider $\bbE[\exp(\lambda\tau_B(i))]$ with $\lambda>0 $. We need to study the law of the exit time $\tau_B(i)$, for $i\in\Lambda$. If $i\in b$, then 
$\bbP[\tau_B(i)=0]=1$.
For $i\notin b$, we have :
\begin{eqnarray}
\bbP[\tau_B(i)=n]&=&\bbP[\tau_B(i)=n|W_1(i)=i+e_d]\bbP[W_1(i)=i+e_d]\nonumber\\
&+&\bbP[\tau_B(i)=n|W_1(i)=i-e_d]\bbP[W_1(i)=i-e_d]\nonumber\\
&+&\bbP[\tau_B(i)=n|W_1(i)\neq i\pm e_d]\bbP[W_1(i)\neq i\pm e_d]
\end{eqnarray}
Setting $f(i,n)= \bbP[\tau_B(i)=n]$, we get the system of equations :
 \begin{equation}
\nu^{-1}(f(i,n)-f(i,n-1))=\left\{
\begin{array}{lll}
f(i+e_d,n-1)+f(i-e_d,n-1)-2f(i,n-1)& d(i,b)>1& n\geq 1 \\
f(i-e_d,n-1)-2 f(i,n-1)& d(i,B_{+})=1&n\geq 2 \\
f(i+e_d,n-1)-2f(i,n-1))& d(i,B_{-})=1&n\geq 2\\
\end{array}
\right.
\end{equation}
and
\begin{equation}
f(i,n)=\left\{\begin{array}{lll}
0 &i \notin b & n=0 \\
1 &i\in b & n=0\\
\nu & d(i,B)=1 & n=1.\\
\end{array}
\right.
\end{equation}
Writing $h(i,\lambda)=\bbE[e^{\lambda\tau_B(i)}]$, we have the equations :
 \begin{equation}
\nu^{-1}(e^{-\lambda}-1)h(i,\lambda)=\left\{
\begin{array}{lll}
h(i+e_d,\lambda)+h(i-e_d,\lambda)-2h(i,\lambda)& d(i,b)>1& \\
 h(i-e_d, \lambda)-2 h(i,\lambda)+1& d(i,B_{+})=1& \\
 h(i+e_d,\lambda)-2 h(i,\lambda)+1& d(i,B_{-})=1.& \\
\end{array}
\right.
\end{equation}
The explicit solution of this system is :
\begin{equation}
h(i,\lambda)=\frac{\cos(\omega(\lambda)(i_d-1))+\cos(\omega(\lambda)(i_d-N))}{1+\cos(\omega(\lambda)(N-1))}
\label{gen_sol}
\end{equation}
where $\omega(\lambda)$ is the solution of $\cos \omega(\lambda)=1-\frac{1}{2\nu}(1-e^{-\lambda})$ for $\lambda$ small enough.
Setting $\lambda=\frac{\nu}{N^2}$ and remembering that $\arccos(1-x)=(2x)^\frac{1}{2}+O(x^\frac{3}{2})$, one concludes easily.
\end{proof}
\noindent Let $m\in\bbN^*$, then,  the smallest time $\tau_L(i,m)$ at which the lazy random walk starting at $i\in\bbZ^d$ comes back to the neighbourhood of a point it has visited at a multiple of $m\in\bbN$ steps back in time is defined by :
\begin{equation}
\tau_L(i,m)=\inf\{t\in\bbN^*,\;\exists \,q\in\bbN^*, d(W_{t-qm}(i),W_{t}(i))|\leq 3\}.
\label{time_loop}
\end{equation}

\begin{prop}
\label{loop_exit}
{\it If $d\geq 7$, for any $i\in \Lambda$ and any $N\in\bbN^*$}
\begin{eqnarray}
\bbP[\tau_L(i,N)& \leq& \tau_B(i)]\leq C\frac{\bbE[\tau_B(i)]}{N^{\frac d 2}}+o(\frac 1{N^{\frac{d}{2}-1}})
\end{eqnarray}
\end{prop}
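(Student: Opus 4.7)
\emph{Plan.} The approach combines a union bound with the Markov property and a local central-limit estimate for the lazy random walk. Unfolding the definition (\ref{time_loop}) of $\tau_L$, one has
$$
\bbP[\tau_L(i,N)\leq\tau_B(i)] \leq \sum_{q\geq 1}\sum_{s\geq 0}\bbP\bigl[s+qN\leq\tau_B(i),\; d(W_s(i),W_{s+qN}(i))\leq 3\bigr].
$$
Applying the Markov property at time $s$ and retaining only the factor $\bbP[s<\tau_B]$ after conditioning on $W_s$ gives
$$
\bbP[\tau_L\leq\tau_B] \leq \bbE[\tau_B(i)]\sum_{q\geq 1}p_q,\qquad p_q:=\sup_{y\in\Lambda}\bbP_y[d(W_{qN}(y),y)\leq 3],
$$
where I have used the identity $\sum_{s\geq 0}\bbP[s<\tau_B(i)]=\bbE[\tau_B(i)]$.

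The key estimate is then $\sum_q p_q=O(N^{-d/2})$. For $q\leq cN$, the displacement time $qN$ stays below the diffusive scale $N^2$ of the box, so the walk on $\Lambda$ is well-approximated by the walk on $\bbZ^d$, and the standard local CLT for lazy random walks gives $p_q\leq C(qN)^{-d/2}$. Because $d\geq 3$ makes $\sum_q q^{-d/2}$ converge, the small-$q$ contribution is $\sum_{q\leq cN}p_q\leq C/N^{d/2}$, which produces the announced main term $C\bbE[\tau_B(i)]/N^{d/2}$. For $q>cN$ the walk has equilibrated in the $d-1$ periodic directions, giving the cylinder heat-kernel bound $p_q\lesssim N^{-(d-1)}(qN)^{-1/2}$; combined with the exponential survival estimate $\bbP[\tau_B>qN]\leq C\exp(-\nu q/N)$ from Proposition~\ref{exittime}, this yields $\sum_{q>cN}p_q=O(N^{-(d-1)})$. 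Multiplying by $\bbE[\tau_B(i)]\leq CN^2$ contributes a correction of size $O(N^{-(d-3)})$, which fits inside the stated error $o(N^{-(d/2-1)})$ provided $d-3>d/2-1$, i.e. $d\geq 5$.

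\emph{Main obstacle.} The most delicate step is the uniform local CLT bound for $p_q$, especially in the crossover regime $qN\sim N^2$, where neither the infinite-lattice nor the torus estimate is sharp on its own. A robust way to treat both regimes simultaneously is to unfold the walk on $\Lambda$ as a sum over $\bbZ^{d-1}$-translates of the walk on $\bbZ^d$: for $qN\leq cN^2$ the Gaussian factors $\exp(-c|kN|^2/(qN))$ confine the sum to the $k=0$ translate, yielding the Euclidean bound; for $qN\gg N^2$ one sums a full Gaussian over $\bbZ^{d-1}$ and exploits the exponential decay coming from the killing at $b$. The hypothesis $d\geq 7$ is comfortably sufficient for the large-$q$ correction to fit in the error term; it is presumably needed in sharper form later in the paper.
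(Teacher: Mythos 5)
Your argument follows essentially the same route as the paper's: both reduce, via the Markov property applied at the moment the loop opens, to a bound of the form $\bbE[\tau_B(i)]\sum_{q\geq 1}(\text{return probability at time } qN)$, and both control that return probability by unfolding the periodic directions onto $\bbZ^d$, applying the local CLT to the $k=0$ translate and showing the wrapped translates are negligible (the paper truncates $\tau_B$ at $N^2\log N$ and splits the translate sum via a Green's function bound plus Hoeffding; you split the $q$-sum at the diffusive scale instead --- a cosmetic difference). One slip to repair: as written, $p_q=\sup_{y}\bbP_y[d(W_{qN}(y),y)\leq 3]$ carries no survival constraint, and since the walk on the finite box $\Lambda$ equilibrates, $p_q$ tends to $\pi(B(y,3))\asymp N^{-d}>0$ as $q\to\infty$, so $\sum_q p_q=\infty$ and the displayed reduction is vacuous. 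The fix is to keep the event $\{\tau_B(y)\geq qN\}$ inside the supremum --- which is exactly what you invoke for $q>cN$ anyway, and which also lets you replace the walk on $\Lambda$ by the walk on the infinite cylinder for free, since the two agree up to time $\tau_B$; with that modification your estimates close the argument for $d\geq 7$ just as in the paper.
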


\begin{proof}
The proposition being trivial if $i\in b$, we assume below $i\notin b$.
We introduce
\begin{equation}
\tau_L(i,m,\zeta)=\inf\{t\in\bbN^*,\;\exists q\in\bbN^*, W_{t-qm}(i)=W_{t}(i)+\zeta\}.
\end{equation}
and
\begin{equation}
q_L(i,m,\zeta)=\inf\{q\in\bbN^*: W_{\tau_L(i,m)-qm}(i)=W_{\tau_L(i,m)}(i)+\zeta\} .
\end{equation}
We observe that
$$
\bbP[\tau_L(i,m) \leq \tau_B(i)]\leq \sum_{\zeta\in\{-3,\ldots,3\}}\bbP[\tau_L(i,m,\zeta) \leq \tau_B(i)].
$$
and we prove the bound of the proposition for $\tau_L(i,m,0)$, which we denote by $\tau_L$.  Other values of $\zeta$ are treated in an analogous way.  We also use the notation $q_L=q_L(i,m,0)$.

We compute :
\begin{eqnarray}
\bbP[\tau_L \leq \tau_B(i) ]&\leq& \sum_{n=1}^{A_N}\bbP[\tau_L \leq \tau_B(i),\tau_B(i)=n]+\bbP[\tau_B(i)>A_N]\nonumber\\
\end{eqnarray}
where $A_N=\lfloor c_1N^2\log N\rfloor$. With this choice of $A_N$, Proposition \ref{exittime} implies that the second term may be made smaller than $C/N^\beta$ with $\beta$ as large as necessary by choosing $c_1$ sufficienty large.  Thus we focus on the first term $S_N=\sum_{n=1}^{A_N}\bbP[\tau_L \leq \tau_B(i),\tau_B(i)=n]$.
\begin{eqnarray}
S_N&\leq&\sum_{n=1}^{A_N}\sum_{t=1}^{n-1}\sum_{q=1}^{\lfloor t/N\rfloor}\sum_{j\in\Lambda}\bbP[W_{t-qN}(i)=j,q_L=q,\tau_L=t,\tau_B(i)=n]\nonumber\\
&\leq&\sum_{n=1}^{A_N}\sum_{t=1}^{n-1}\sum_{q=1}^{\lfloor t/N\rfloor}\sum_{j\in\Lambda}\sum_{\substack{k\in\bbZ^d\\k_d=0}} \bbP_\infty[W_{t-qN}(i)=j,W_t(i)=j+kN,q_L=q,\tau_L=t,\tau_B(i)=n]\nonumber\\
&\leq&\sum_{n=1}^{A_N}\sum_{t=1}^{n-1}\sum_{q=1}^{\lfloor t/N\rfloor}\sum_{j\in\Lambda}\sum_{\substack{k\in\bbZ^d\\k_d=0}} \bbP_\infty[\tau_B(j+kN)=n-t]\bbP_\infty[W_{t}(i)=j+kN,W_{t-qN}(i)=j]\nonumber\\
\label{bou}
\end{eqnarray}
The second inequality follows from the periodic boundary conditions on the $d-1$-first components.
In the last line we have used conditioned the probability of the event of $\{\tau_B(i)=n\}$ on the event :
$$
\{W_{t-qN}(i)=j, W_t(i)=j+kN,\tau_L=t,q_L=q\}\subset \{W_{t}(i)=j+kN,W_{t-qN}(i)=j\},
$$
and next the  Markov property of the random walk.
But we also have :
\begin{eqnarray}
\bbP_\infty[W_{t}(i)=j+kN, W_{t-qN}(j)=j]&=&\bbP_\infty[W_{t-qN}(i)=j]\bbP_\infty[W_{qN}(j)=j+kN]\nonumber\\
&=&\bbP_\infty[W_{t-qN}(i)=j]\bbP_\infty[W_{qN}({\bf 0})=kN]\nonumber\\
\end{eqnarray}
where ${\bf 0}=(0,\ldots,0)$ by the Markov property and translation invariance of $\bbP_\infty$.  Moreover,
$\bbP_\infty[\tau_B(j+kN)=n-t]=\bbP_\infty[\tau_B(j)=n-t]$, if $k_d=0$.
Therefore :
\begin{equation}
S_N\leq\sum_{n=1}^{A_N}\sum_{t=1}^{n-1}\sum_{q=1}^{\lfloor t/N\rfloor}\sum_{j\in\Lambda}\sum_{\substack{k\in\bbZ^d\\ k_d=0}} \bbP_\infty[\tau_B(j)=n-t]\bbP_\infty[W_{t-qN}(i)=j]\bbP_\infty[W_{qN}({\bf 0})=kN].
\label{bounds}
\end{equation}
Using again the Markov property, we write for $qN\leq t\leq n$ :
\begin{eqnarray}
\sum_{j\in\Lambda}\bbP_\infty[\tau_B(j)=n-t]\bbP_\infty[W_{t-qN}(i)=j]&=&\sum_{j\in\Lambda}\bbP_\infty[\tau_B(i)=n-qN|W_{t-qN}(i)=j]\bbP_\infty[W_{t-qN}(i)=j]\nonumber\\
&=&\bbP_\infty[\tau_B(i)=n-qN].
\end{eqnarray}
Thus,
\begin{eqnarray}
S_N&\leq&\sum_{n=1}^{A_N}\sum_{q=1}^{\lfloor n/N \rfloor}\sum_{t=qN+1}^{n-1}\sum_{k\in\bbZ^d}\bbP_\infty[\tau_B(i)=n-qN] \bbP_\infty[W_{qN}({\bf 0})=kN]\nonumber\\
&\leq&\sum_{n=1}^{A_N}\sum_{q=1}^{\lfloor n/N\rfloor}\sum_{k\in\bbZ^d}(n-qN)\bbP_\infty[\tau_B(i)=n-qN] \bbP_\infty[W_{qN}({\bf 0})=kN]\nonumber\\
&\leq&\sum_{q=1}^{\lfloor A_N/N\rfloor}\sum_{n=qN}^{A_N}\sum_{k\in\bbZ^d}(n-qN)\bbP_\infty[\tau_B(i)=n-qN] \bbP_\infty[W_{qN}({\bf 0})=kN]\nonumber\\
&\leq&\bbE[\tau_B]\sum_{q=1}^{\lfloor A_N/N\rfloor }\sum_{k\in\bbZ^d} \bbP_\infty[W_{qN}({\bf 0})=kN]\nonumber\\
&\leq&\bbE[\tau_B]\sum_{q=1}^{\lfloor A_N/N\rfloor }\left (\bbP_\infty[W_{qN}({\bf 0})={\bf 0}]+\sum_{1<\|k\|\leq B_N} \bbP_\infty[W_{qN}({\bf 0})=kN]+\sum_{B_N<\|k\|\leq q} \bbP_\infty[W_{qN}({\bf 0})=kN]\right)\nonumber\\
\label{loop3}
\end{eqnarray}
where $B_N=c_2(\frac q N)^{\frac 1 2}(\log q N)^{\frac 1 2}$.  Proposition 2.4.4 in \cite{Lawler} guarantees that 
$$
\bbP_\infty[W_{qN}({\bf 0})={\bf 0}]\leq \frac C{(qN)^{d/2}}
$$
so that 
$$
\sum_{q=1}^{\lfloor A_N/N\rfloor }\bbP_\infty[W_{qN}({\bf 0})={\bf 0}]\leq \sum_{q=1}^{\lfloor A_N/N\rfloor }\frac C{(qN)^{d/2}}\leq\frac C{N^{d/2}}
$$
for $d>2$.
The proposition will be proven once we show that the contribution of the two remaining terms in (\ref{loop3}) is $o(\frac{1}{N^{d/2}})$.
Theorem 4.3.1 in \cite{Lawler} implies that
$$
\bbP_\infty[W_{qN}({\bf 0})=kN]\leq\frac C {\|kN\|^{d-2}}
$$
so that
\begin{eqnarray}
\sum_{q=1}^{\lfloor A_N/N\rfloor}\sum_{1<\|k\|\leq B_N} \bbP_\infty[W_{qN}({\bf 0})=kN]&\leq&\sum_{q=1}^{\lfloor A_N/N\rfloor}\sum_{1<\|k\|\leq B_N}\frac C {\|kN\|^{d-2}}\nonumber\\
&\leq& C \frac{1}{N^{d-2}}\sum_{q=1}^{\lfloor A_N/N\rfloor}\sum_{n=2}^{B_N} n\nonumber\\
&\leq& C \frac{1}{N^{d-3}}(\log (N\log N))^3
\end{eqnarray}
which is $o(\frac 1 {N^{\frac d 2}})$ for $d\geq 7$.  
Let us look now at the third term of (\ref{loop3}) :
\begin{eqnarray}
\sum_{q=1}^{\lfloor A_N/N\rfloor}\sum_{B_N<\|k\|\leq q} \bbP_\infty[W_{qN}({\bf 0})=kN]&\leq& \sum_{q=1}^{\lfloor A_N/N\rfloor} \bbP_\infty[\|W_{qN}({\bf 0})\|>c_2(qN\log (qN))^\frac{1}{2}]\nonumber\\
&\leq&\sum_{q=1}^{\lfloor A_N/N\rfloor} d\; \bbP_\infty[|\hat W_{qN}(0)|>\frac{c_2}{d}(qN\log (qN))^\frac{1}{2}] \nonumber\\
&\leq & \sum_{q=1}^{\lfloor A_N/N\rfloor }\frac{1}{(qN)^\alpha}
\end{eqnarray}
where $\hat W_t(0)$ is a 1D lazy random walk starting at $0$.  The last inequality is obtained by using Hoeffding's inequality. $\alpha$ can be made as large as necessary by choosing $c_2$ sufficiently large in the definition of $B_N$. Thus the last sum above may be made $o(\frac{1}{N^{\frac  d 2-1}})$ and this concludes the proof.
\end{proof}

Let two independent random walks $\{W_t(i): t\in\bbN\}$ and $\{W_t(i'):t\in\bbN\}$ starting from two points  $i,i'\in\Lambda$ and integers $m\in\bbN$, define the ``collision" times :
$$
\tau_I(i\to i',m)=\inf\{t>0: \exists q\in\bbN,\; d(W_t(i),W_{t-qm}(i'))\leq 3\},
$$
and
\begin{equation}
\tau_I(i,i',m)=\tau_I(i\to i',m)\wedge\tau_I(i'\to i,m).
\label{intersection_lazy}
\end{equation}
We recall the result of Erd\"os and Taylor that is of interest to us, Lemma 9 in \cite{Erdos} :
\begin{lemma}
\label{Erdos}
Let $\{S_t(i) : t\in\bbN\}$ and $\{S_t(i') : t\in\bbN\}$ two independent symmetric random walks on $\bbZ^d$, with $d>4$ and with starting points   $i,i'\in \Lambda$ such that $\rho=\|i-i'\|>0$, then
\begin{equation}
\bbP[\{S_t(i) : t\in\bbN\}\cap \{S_t(i') : t\in\bbN\}\neq \emptyset]\leq \frac C {\rho^{d-4}}.
\label{mainsta}
\end{equation}
 \end{lemma}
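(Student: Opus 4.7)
\emph{Proof plan.} The plan is to prove the intersection estimate by a first-moment (Markov) bound applied to the number of intersection pairs of the two walks. Define
\begin{equation*}
I=\sum_{s,t\geq 0}\mathbf{1}\{S_s(i)=S_t(i')\}.
\end{equation*}
The event that the two ranges intersect is exactly $\{I\geq 1\}$, so Markov's inequality gives
\begin{equation*}
\bbP\bigl[\{S_t(i): t\in\bbN\}\cap \{S_t(i'): t\in\bbN\}\neq\emptyset\bigr]\leq \bbE[I],
\end{equation*}
and it suffices to show $\bbE[I]\leq C\rho^{4-d}$.

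First I would rewrite $\bbE[I]$, using independence of the two walks and Fubini, as a convolution of Green's functions:
\begin{equation*}
\bbE[I]=\sum_{x\in\bbZ^d}\Bigl(\sum_{s\geq 0}\bbP[S_s(i)=x]\Bigr)\Bigl(\sum_{t\geq 0}\bbP[S_t(i')=x]\Bigr)=\sum_{x\in\bbZ^d}G(i,x)\,G(i',x),
\end{equation*}
where $G$ is the Green's function, which by translation invariance satisfies $G(u,v)=G(0,v-u)$. Then I would invoke the standard decay bound valid in dimension $d\geq 3$ (a consequence of the local central limit theorem; see e.g.\ Theorem 4.3.1 of \cite{Lawler}),
\begin{equation*}
G(0,x)\leq \frac{C}{(1+\|x\|)^{d-2}},
\end{equation*}
reducing the problem to the convolution estimate
\begin{equation*}
\bbE[I]\leq C\sum_{x\in\bbZ^d}\frac{1}{(1+\|x-i\|)^{d-2}\,(1+\|x-i'\|)^{d-2}}.
\end{equation*}

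The final step is to show that this sum is $O(\rho^{4-d})$, and for this I would split $\bbZ^d$ into three regions: $A_1=\{\|x-i\|\leq \rho/3\}$, $A_2=\{\|x-i'\|\leq \rho/3\}$, and $A_3$ the complement. On $A_1$ the triangle inequality gives $\|x-i'\|\geq 2\rho/3$, so the second factor is $O(\rho^{2-d})$; the remaining sum satisfies $\sum_{\|y\|\leq \rho/3}(1+\|y\|)^{2-d}=O(\rho^{2})$ (the summand behaves like $r^{d-1}\cdot r^{2-d}=r$ after grouping by annular shells on $\bbZ^d$), yielding contribution $O(\rho^{4-d})$. The region $A_2$ is symmetric. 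On $A_3$, approximating the sum by $\int_{\bbR^d}|y-i|^{2-d}|y-i'|^{2-d}\,dy$ and substituting $y=i+\rho z$, $e=(i'-i)/\rho$, gives $\rho^{4-d}\int|z|^{2-d}|z-e|^{2-d}\,dz$; this integral is finite precisely because $d>4$ ensures $2(d-2)>d$, the condition for convergence at infinity (local integrability at the two singularities is automatic since $d-2<d$).

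\emph{Main obstacle.} The real content is the convolution estimate and the scaling argument that extracts $\rho^{4-d}$; the hypothesis $d>4$ enters exactly once, as the requirement that $\int_{\bbR^d}|z|^{2-d}|z-e|^{2-d}\,dz<\infty$ in the far region, reflecting the fact that independent random walks intersect almost surely in $d\leq 4$. Everything else---the Markov reduction, the Fubini interchange, and the Green's function bound---is standard for transient symmetric random walks in $d\geq 3$.
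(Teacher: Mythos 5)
Your argument is correct, but note that the paper does not prove this lemma at all: it is quoted verbatim as Lemma 9 of Erd\H{o}s--Taylor (1960) and used as a black box. Your first-moment proof is essentially the classical one and is complete in outline: the identity $\bbP[I\geq 1]\leq \bbE[I]=\sum_{x}G(i,x)G(i',x)$ (Tonelli, since everything is nonnegative), the Green's function bound $G(0,x)\leq C(1+\|x\|)^{2-d}$ for a symmetric finite-range walk in $d\geq 3$, and the convolution estimate all check out. Two small remarks. First, in the region $A_3$ you can avoid the passage to a continuum integral entirely: on the half of $A_3$ where $\|x-i'\|\geq\|x-i\|$ one bounds the product by $(1+\|x-i\|)^{2(2-d)}$ and sums over annuli $\|x-i\|\sim r$ with $r>\rho/3$, getting $\sum_{r>\rho/3}r^{d-1}r^{4-2d}=\sum_{r>\rho/3}r^{3-d}\leq C\rho^{4-d}$ for $d>4$; this is where the hypothesis $d>4$ enters, exactly as you say. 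Second, your route has a concrete advantage in the context of this paper: Proposition 4.6 invokes ``a trivial adaptation of Proposition \ref{Erdos} to the case of lazy random walks,'' and your proof makes that adaptation genuinely trivial, since the Green's function of the lazy walk with holding parameter $p$ is $(1-p)^{-1}$ times that of the underlying walk, so the same bound holds with a modified constant. The only implicit hypothesis worth flagging is that ``symmetric random walk'' must be read as having bounded (or at least finite-variance) increments for the Green's function decay to hold; for the nearest-neighbour and lazy nearest-neighbour walks used in the paper this is automatic.
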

We define $W(i,[0,t])=\{j\in\bbZ^d: \exists s\in[0,t], W_s(i)=j\}$ and for any $A\subset\Lambda$, $\overline A=\{j\in\Lambda: \exists i\in A, d(j,i)\leq 3\}$. The following lemma will be also helpful.
 \begin{lemma}
Let $W_t(i)$ a symmetric lazy random walk starting at $i\in\bbZ^d$ and $\lambda>0$ large enough, for $t=c_1\lambda^2\log\lambda$ and $i\in\bbZ^d$ and $j\in\bbZ^d$ such that $\|i-j\|>c_2\lambda \log\lambda$, then
 $$
 \bbP[\overline {W(i,[0,t])}\cap \overline {W(j,[0,t])}\neq \emptyset]\leq \frac {C} {\lambda^{\alpha}}
 $$
 where $\alpha=\frac{c_2^2}{18d^2 c_1}$.
 \label{finitet}
 \end{lemma}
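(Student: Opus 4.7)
The argument is a triangle-inequality reduction to a single large-deviation bound for the lazy walk. First I would observe that if $\overline{W(i,[0,t])}\cap\overline{W(j,[0,t])}\neq\emptyset$, then there exist times $s_1,s_2\in[0,t]$ and a common point $p$ within distance $3$ of both $W_{s_1}(i)$ and $W_{s_2}(j)$. The triangle inequality then forces
$$
\|W_{s_1}(i)-i\|+\|W_{s_2}(j)-j\|\geq\|i-j\|-6>c_2\lambda\log\lambda-6,
$$
so that for $\lambda$ large at least one of the two displacements exceeds $\tfrac{c_2}{3}\lambda\log\lambda$ (the splitting $\tfrac{1}{3}$ leaves room to absorb the additive $6$). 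A union bound over which walk is responsible, together with the translation invariance of $\bbP_\infty$, reduces the lemma to the estimate
$$
\bbP\!\left[\max_{0\leq s\leq t}\|W_s(\boo)\|\geq\tfrac{c_2}{3}\lambda\log\lambda\right]\leq \frac{C}{2\,\lambda^\alpha}.
$$

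Next I would reduce to a one-dimensional statement. Writing $\|W_s(\boo)\|=\sum_{k=1}^d|\hat W_s^{(k)}|$, a union bound over the $d$ coordinates forces at least one coordinate to exceed $M:=\tfrac{c_2}{3d}\lambda\log\lambda$. Each coordinate $\hat W_s^{(k)}$ is a symmetric martingale with increments in $\{-1,0,+1\}$, so the exponential version of Doob's maximal inequality applied to the submartingale $e^{\theta\hat W_s^{(k)}}$, together with the standard Hoeffding-Azuma MGF bound $\bbE[e^{\theta\hat W_t^{(k)}}]\leq e^{\theta^2 t/2}$ and optimization over $\theta$, yields
$$
\bbP\!\left[\max_{0\leq s\leq t}|\hat W_s^{(k)}|\geq M\right]\leq 2\exp\!\left(-\frac{M^2}{2t}\right).
$$
Substituting $M=\tfrac{c_2}{3d}\lambda\log\lambda$ and $t=c_1\lambda^2\log\lambda$ gives an exponent of exactly $\tfrac{c_2^2\log\lambda}{18\,d^2c_1}$, and collecting the constants produces the bound $C/\lambda^\alpha$ with $\alpha=c_2^2/(18\,d^2 c_1)$.

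There is no essential obstacle in this argument; the calculation is precisely of the same flavour as the Hoeffding bound used in the closing lines of the proof of Proposition~\ref{loop_exit}. The one mild subtlety is the passage from an endpoint Hoeffding estimate to a supremum along the path, which is handled by Doob's exponential maximal inequality (the reflection principle would give the same constant). The only arithmetic choice is the $\tfrac{1}{3}$ split above; taking $\tfrac{1}{2}$ would produce $c_2^2/(8d^2c_1)$ but would not absorb the harmless additive constant $6$ coming from the fattening $\overline{(\cdot)}$.
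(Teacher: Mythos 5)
Your proof is correct and follows essentially the same route as the paper: reduce the intersection event to one walk travelling at least $\|i-j\|/3$ (in some coordinate, hence at least $\tfrac{c_2}{3d}\lambda\log\lambda$) from its starting point, then apply Doob's exponential maximal inequality with the $\cosh\theta\leq e^{\theta^2/2}$ MGF bound to get $2e^{-M^2/2t}$ and hence the exponent $c_2^2/(18d^2c_1)$. The only cosmetic difference is the order of the reductions (the paper first picks the coordinate realizing $|i_l-j_l|>\|i-j\|/d$ and then argues in one dimension, whereas you first apply the triangle inequality in $\bbZ^d$ and then take a union bound over coordinates); the constants come out identical.
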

 \begin{proof}
 Since $\|i-j\|>c_2\lambda\log\lambda$, then there is at least one $l$ such that $1\leq l\leq d$ and $|i_l-j_l|>c_2\lambda\log\lambda/d$.  Moreover
\begin{eqnarray}
\bbP[\overline{W(i,[0,t])}\cap \overline{W(j,[0,t])}\neq \emptyset]&\leq& \bbP[\overline{W^l(i,[0,t])}\cap \overline{W^l(j,[0,t])}\neq \emptyset].\nonumber\\
 \end{eqnarray}
 $W^l(i)$ is the $l$-th component of the lazy random walk and has itself the law of a 1D lazy random walk starting at $i_l$. Therefore, it is enough to consider the case of 1D lazy random walks on $\bbZ$.
 
 \begin{eqnarray}
 \bbP[\overline{W(i,[0,t])}\cap \overline{W(j,[0,t])}\neq \emptyset]&\leq& \bbP[W(i,[0,t])\cap B(i,\frac{|i-j|}{3})\neq\emptyset]\nonumber\\
&&+\bbP[W(j,[0,t])\cap B(j,\frac{|i-j|}{3})\neq\emptyset]\nonumber\\
&\leq& 2\,\bbP[|W(0,[0,t])|>\frac{|i-j|}{3}]\nonumber\\
&\leq& 2\bbP[\max_{s\in[0,t]} |W_s(0)|>\frac{|i-j|}{3}]\nonumber\\
&\leq & 4 \exp\left(-\frac{|i-j|^2}{18t}\right),
 \end{eqnarray}
 from which (\ref{mainsta}) follows.
 The first inequality holds for any $i,j$ such that $|i-j|>18$ which is guaranteed by the hypothesis of the lemma by choosing $\lambda$ large enough.
The last inequality is justified by the following standard application of Doob's maximal inequality to the non-negative sub-martingale $\{e^{\theta W_t(0)}\}_{t\geq 0}$ $(\theta\geq 0)$. For any $u>0$,
 \begin{eqnarray*}
\bbP[\max_{s\in[0,t]} W_s(0)>u] & = & \bbP[\max_{s\in[0,t]} e^{\theta W_s(0)}>e^{\theta u}]\\
& \leq & e^{-\theta u}\bbE[e^{\theta W_t(0)}]\\
& \leq & e^{-\theta u}(\cosh \theta)^t\\
& \leq & e^{-\theta u+\frac{\theta^2}{2}t}.
 \end{eqnarray*}
The second inequality follows from the convexity of the map $x\to e^{\theta x}$ which implies
\begin{eqnarray*}
\bbE[e^{\theta X}]&\leq& \cosh \theta +\bbE[X]\sinh\theta\\
&\leq &\cosh\theta
\end{eqnarray*}
for any random variable such that $|X|\leq 1$ and $\bbE[X]=0$.

Choosing $\theta=\frac{u}{t}$ yields 
\begin{eqnarray*}
\bbP[\max_{s\in[0,t]} W_s(0)>u] & \leq & e^{-\frac{u^2}{2t}},
 \end{eqnarray*}
 and by symmetry,  
 \begin{eqnarray*}
\bbP[\max_{s\in[0,t]} |W_s(0)|>u] & \leq & 2e^{-\frac{u^2}{2t}}.
 \end{eqnarray*}
 \end{proof}
We now come back to the lazy random walk defined on $\Lambda$.
\begin{prop}For $i$ and  $i'$ in $\Lambda$ such that $d(i,i')>N^{\frac 3 4}$  and $d\geq 7$,
$$
\bbP[\tau_I(i,i',m)< \tau_B(i)\vee\tau_B(i') ]\leq \frac{C}{N^{\frac{9}{4}}}, \quad \forall m\in\bbN.
$$
\label{intersection}
\end{prop}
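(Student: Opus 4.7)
The plan is to follow the strategy of Proposition \ref{loop_exit}. Choose $A_N = c_1 N^2 \log N$ with $c_1$ large enough that Proposition \ref{exittime} yields $\bbP[\tau_B(i)\vee\tau_B(i') > A_N] \leq C/N^{9/4}$. On the complementary event, the symmetry between $i$ and $i'$ reduces the task to bounding $\bbP[\tau_I(i\to i',m) \leq A_N]$, which in turn is dominated, by a union bound on the meeting time $t$, the shift $q$, and the offset $\zeta$ with $\|\zeta\|\leq 3$, by
$$
S_N \;=\; \sum_{t=1}^{A_N} \sum_{q=0}^{\lfloor t/m\rfloor} \sum_{\zeta: \|\zeta\|\leq 3} \bbP\bigl[\, W_t(i) - W_{t-qm}(i') = \zeta\ \text{on the torus}\,\bigr].
$$

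For each summand, independence of the two walks means that $U := W_t(i) - W_{t-qm}(i')$ is the sum of $2t-qm\in [t,2t]$ independent centered bounded increments, plus the deterministic mean $i-i'$. Lifting to $\bbZ^d$ over the horizontal wrap-arounds ($k\in\bbZ^d$ with $k_d=0$) and invoking a local CLT for lazy random walks (as in Proposition~2.4.4 and Theorem~4.3.1 of \cite{Lawler}) yields the heat-kernel bound
$$
\bbP\bigl[\, U = \zeta\ \text{on torus}\,\bigr] \;\leq\; \frac{C}{t^{d/2}} \sum_{\substack{k\in\bbZ^d\\ k_d=0}} \exp\Bigl(-c\,\tfrac{\|(i-i')-\zeta-kN\|^2}{t}\Bigr).
$$
Keeping only the $k=0$ term and setting $\rho=d(i,i')$, the sums over $\zeta$ and $q$ produce at most a factor $Ct/m$, so the main contribution to $S_N$ is $\tfrac{C}{m}\sum_t t^{1-d/2}e^{-c\rho^2/t}$. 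The change of variable $u=\rho^2/t$ recognises this as a convergent integral of order $\rho^{4-d}$, so this piece is bounded by $C\rho^{4-d}/m \leq CN^{-3(d-4)/4}/m \leq C/N^{9/4}$ for $d\geq 7$, $m\geq 1$ and $\rho > N^{3/4}$. (For $m=0$ only the $q=0$ contribution remains, and the corresponding Green-function estimate gives the even better bound $C\rho^{2-d}$.)

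The main obstacle is showing that the wrap-around terms $k\neq 0$ are negligible. For $t\leq N^2$ the factor $\exp(-c\|kN\|^2/t)$ renders the sum over $k\neq 0$ bounded, and the argument above goes through unchanged. For $N^2<t\leq A_N$, however, a naive estimate produces the polylogarithmic factor $(\log N)^{(d-1)/2}$; to avoid this, one uses the standard Gaussian bound $\sum_k\exp(-c\|kN\|^2/t)\leq C(\sqrt{t}/N)^{d-1}$, which turns this contribution into $O(A_N^{3/2}/N^{d-1})=o(N^{-9/4})$ for $d\geq 7$. The constant $c_1$ in the definition of $A_N$ can be chosen once and for all to make both the exit-time tail and this second contribution no larger than the desired $C/N^{9/4}$, which completes the bound.
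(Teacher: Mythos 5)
Your proof is correct, and the outer skeleton matches the paper's: truncate at $A_N=c_1N^2\log N$ via Proposition \ref{exittime}, unfold the torus into $\bbZ^d$ wrap-around copies indexed by $k$ with $k_d=0$, show the nearby copies contribute $O(\rho^{4-d})=O(N^{-3(d-4)/4})\leq O(N^{-9/4})$ and the far copies are negligible. Where you genuinely diverge is in how the intersection estimate itself is obtained. The paper invokes the Erd\H{o}s--Taylor lemma (Lemma \ref{Erdos}) as a black box for the copies with $\|k\|\leq B_N=c_2\log N$, and then controls the copies with $\|k\|>B_N$ by a separate range-separation estimate (Lemma \ref{finitet}, proved via Doob's maximal inequality); the splitting is entirely in $k$. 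You instead run a single union bound over the meeting time $t$, the shift $q$ and the offset $\zeta$, feed it into a local-CLT heat-kernel bound, and recover the $\rho^{4-d}$ order by summing $t^{1-d/2}e^{-c\rho^2/t}$ — essentially re-deriving the Erd\H{o}s--Taylor estimate in the form needed here — while the far copies are handled by a Gaussian summation split in $t$ at $t=N^2$ rather than in $k$. Your route is more self-contained: it handles the fattening to distance $3$ and the laziness explicitly through the sum over $\zeta$ and the LCLT, whereas the paper appeals to a ``trivial adaptation'' of Lemma \ref{Erdos} to lazy walks and fattened ranges; the price is that you need a quantitative LCLT with Gaussian tails for lazy walks, which is standard but stronger than the two pointwise bounds from Lawler--Limic that the paper actually cites. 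Two cosmetic points: the factor from the $q$-sum is $\lfloor t/m\rfloor+1$ rather than $Ct/m$ (the extra $q=0$ piece contributes only $O(\rho^{2-d})$, so nothing is lost), and for $t\leq N^2$ the relevant exponent for small $\|k\|$ is $\|i-i'-\zeta-kN\|^2/t\geq c\rho^2/t$ rather than $\|kN\|^2/t$; since only $O(3^{d-1})$ such $k$ exist, each contributing the same order as $k=0$, your conclusion stands.
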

\begin{proof}
In this proof, we use the notation $\tau_I=\tau_I(i,i',m)$ and $\tau_B=\tau_B(i)\vee\tau_B(i')$ and start by decomposing :
\begin{eqnarray}
\bbP[\tau_I< \tau_B ]&=&\bbP[\tau_I< \tau_B,\tau_B\leq c N^2\log N ]+\bbP[\tau_I< \tau_B,\tau_B> c N^2\log N ].
\end{eqnarray}
We notice that by (\ref{exp_bound}), the second term may be made smaller than $C/N^\alpha$ with $\alpha$ as large as necessary by chosing $c$ sufficienty large.
We define the event :
$$
I(i,j;t)=\{\overline {W(i,[0,t])}\cap \overline {W(j,[0,t])}\neq \emptyset\}.
$$
The first term may be bounded in the following way (we use again the notation $A_N=\lfloor cN^2\log N\rfloor$) :
\begin{eqnarray}
\bbP[\tau_I< \tau_B,\tau_B\leq c N^2\log N ]
&\leq &\bbP_\infty[\bigcup_{t=1}^{\tau_B}\bigcup_{t'=1}^{\tau_B}\bigcup_{\substack{ k\in\bbZ^d\\ k_d=0}}\{d(W_t(i),W_{t'}(i')+kN)\leq 2\},\tau_B\leq A_N]\nonumber\\
&\leq &\sum_{k\in\bbZ^d}\bbP_\infty[I(i,i'+kN;A_N)]
\end{eqnarray}
We split the RHS and get :
\begin{eqnarray}
\bbP[\tau_I< \tau_B,\tau_B\leq c N^2\log N ]&\leq &\sum_{\|k\|\leq 2}\bbP_\infty[I(i,i'+kN;A_N)] \nonumber\\
&&+\sum_{3\leq \|k\|\leq B_N}\bbP_\infty[I(i,i'+kN;A_N)] \nonumber\\
&&+\sum_{ \|k\|> B_N}\bbP_\infty[I(i,i'+kN;A_N)] \nonumber\\
\label{split}
\end{eqnarray}
and we choose $B_N=c_2\log N$. We now look at the first term of (\ref{split}).
Since by hypothesis $d(i,i')>N^{\frac 3 4}$, we have $\|i-i'-kN\|>N^{\frac 3 4}$, for any $k\in\bbZ^d$ and thus using a trivial adaptation of Proposition \ref{Erdos} to the case of lazy random walks, we obtain :
$$
\sum_{\|k\|\leq 2}\bbP_\infty[I(i,i'+kN;A_N)] \leq  \frac {C}{N^{\frac {9} 4}}.
$$
if $d\geq 7$.

We next consider the second term of (\ref{split}).  With the help of Proposition \ref{Erdos},

\begin{eqnarray}
\sum_{3\leq \|k\|\leq B_N}\bbP_\infty[I(i,i'+kN;A_N)] &\leq &\sum_{3\leq \|k\|\leq B_N}\frac{1}{\|i-i'-kN\|^{d-4}}\nonumber\\
\end{eqnarray}
But since $\|i-i'\|\leq 2N$ and $\|k\|\geq 3$, we have 
$$
\|i-i'-kN\|^{d-4}\geq c \|k\|^{d-4} N^{d-4}.
$$
Thus,
\begin{eqnarray}
\sum_{3\leq \|k\|\leq B_N}\bbP_\infty[I(i,i'+kN;A_N)] &\leq &C\sum_{3\leq n\leq B_N}\frac{n^{d-1}}{n^{d-4} N^{d-4}}\nonumber\\
&\leq & C\frac{(\log N)^4}{N^{d-4}}=o(\frac 1{N^{\frac 9 4}}).
\end{eqnarray}
when $d\geq 7$.

We show finally that the last term may also be made $o(\frac 1 {N^{\frac {9} 4}})$.
We note that $\forall i,i'\in\Lambda$  and $t,t'\leq A_N$, if $\|k\|> 5A_N/N$ then
$$
\bbP_\infty[\|W_t(i)-W_{t'}(i')-kN\|\leq 6]=0.
$$
Indeed, if $k\in\bbZ^d$ is such that $\|W_t(i)-W_{t'}(i')-kN\|\leq 6$ we necessarily have the following inequality (remember that, under $\bbP_\infty$, random walks are now defined on $\bbZ^d$) :
$$
\|k\|N\leq \|W_t(i)\|+\|W_{t'}(i')\|+6\leq t+\|i\|+t'+\|i'\|+6\leq 5A_N.
$$
Therefore,
\begin{eqnarray}
\sum_{ \|k\|> B_N}\bbP_\infty[I(i,i'+kN;A_N)]& \leq& \sum_{ B_N<\|k\|\leq 5 \frac{A_N}{N}}\bbP_\infty[I(i,i'+kN;A_N)] \nonumber\\
&\leq& (\frac{A_N}{N})^d\frac{C}{N^\alpha}\nonumber\\
\end{eqnarray}
where the last line follows from Lemma \ref{finitet} with $\lambda=N$ and because for $\|k\|>B_N$ and $N$ large enough, we have $\|i-i'-kN\|>c N\log N$.  This can be made smaller than $C/N^\beta$ with $\beta$ as large as necessary by choosing $c_2$ as large as necessary.

\end{proof}

\section{Recurrence and intersection of orbits and connection with lazy random walks}
In this section, we consider the case when $\xi$ is a collection of Bernoulli random variables of parameter $\mu$.   We show that up until the time they make ``loops" or intersect each other, the horizontal components of the orbits have the same law than a lazy random walk.
For any $x=(k,i)\in{\cC}$, we define $h(k,i)=i$ and $v(k,i)=k$.  For any $t\geq 0$, we define $H_t(x)=h(F^t(x))$ and $V_t(x)=v(F^t(x))$ that describes the motion of the horizontal component of the orbit starting at $x$. 
We define now the first time at which the orbit $(F^t(x))_{t\in\bbN}$ comes back in a neighbourhood of a point it has already visited in the past and makes a ``loop" :
\begin{equation}
 t_{L} (x)=\inf\{t>0:\exists s<t,\; V_t(x)=V_s(x'),\;d(H_t(x),H_s(x))\leq 3\}.
\label{Loop_time}
\end{equation}

An orbit starting from $x$ keeps discovering a fresh random scenery of $\xi$ variables until time $t_L(x)$.
Next, we  consider orbits with different starting points $x=(k,i)$ and $x'=(k',i')$. We define the first time when one of the two orbits visit a neighbourhood of a site that has been visited by the other orbit at some time in the past :
\begin{equation}
t_I(x\to x')=\inf\left\{t>0:\exists s<t,\;
V_t(x)=V_s(x'),\;{\rm and}\;d(H_t(x),H_s(x'))\leq 3\right\}
\end{equation}
And we define $t_I(x,x')=t_I(x\to x')\wedge t_I(x'\to x)$. The motions of two orbits starting respectively at $x$ and $x'$ are independent until time $t_I(x,x')$.
By construction of the dynamics, the vertical coordinate of a given trajectory simply moves one step ahead on a ring and thus we get the bound :
\begin{equation}
t_I(x,x')\geq (k'-k)\wedge (N-k'+k).
\label{bound_intersection}
\end{equation}

\begin{prop}
\label{connection}
Let $\xi$ a set of Bernoulli variables of parameter $\mu\in]0,1[$, 
\begin{enumerate}
\item For any $x\in\cC$,  

$$
\{H_s(x): 0\leq s\leq t_L(x)\}
$$  
has the same law than 
$$
\{W_s(h(x)): 0\leq s\leq \tau_L(h(x))\}
$$ 
where $\tau(i):=\tau(i,N)$ and$\{W_s(i) : s\in\bbN\}$ is the lazy symmetric random walk defined in the previous section with parameters $\nu=\mu(1-\mu)^{4d-2}$.

\item Let $x=(k,i)$ and $x'=(k',i')$ such that $x\neq x'$, then the collections 
$$
\{H_s(x):0\leq s\leq t_I(x,x')\}
$$ and 
$$
\{H_s(x'):0\leq s\leq t_I(x,x')\}
$$ 
are mutually independent.
\end{enumerate}
\end{prop}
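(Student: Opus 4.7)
The strategy is to track, step by step, exactly which $\xi$-variables the map $F$ queries at each site. From (\ref{c_def}) one reads off that $F(k,i)$ depends only on
\[
E(k,i) := \{(k,ab) : d(a,b)=1,\ \min(d(a,i),d(b,i)) \le 1\},
\]
namely the $\xi$'s attached to edges touching $i$ or a neighbor of $i$. A short geometric check gives the disjointness principle $E(k,i) \cap E(k',i') = \emptyset$ whenever $k \ne k'$ (distinct vertical layers carry disjoint index families) or $d(i,i') > 3$ (since each edge in $E(k,i)$ has both endpoints within distance $2$ of $i$). Under the Bernoulli($\mu$) law, expanding $c(k,ij)$ yields $\bbP[c(k,ij)=1] = \mu(1-\mu)^{4d-2} = \nu$ for each of the $2d$ neighbors $j$ of $i$, and these events are pairwise disjoint because two simultaneous jumps would force a single $\xi(k,ij')$ to equal both $0$ and $1$. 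Summing gives $\bbP[F(k,i)=(k+1,i)] = 1-2d\nu$ in the bulk, matching the transition kernel (\ref{markov1}).

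For part (1), I would proceed by induction on $t$, showing that on $\{t_L(x) > t\}$ and conditional on $\cF_t := \sigma(F^s(x) : s \le t)$, the increment from $H_t(x)$ to $H_{t+1}(x)$ has precisely the lazy-walk transition probabilities. The key input is that $\cF_t$ is measurable with respect to $\{\xi(e) : e \in \bigcup_{s<t} E(V_s(x), H_s(x))\}$, combined with the deterministic identity $V_s(x) = v(x) + s \bmod N$: therefore, for $s < t$, the equality $V_s = V_t$ forces $t-s$ to be a positive multiple of $N$, which by the definition of $t_L(x)$ entails $d(H_s,H_t) > 3$ whenever $t < t_L(x)$. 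The disjointness principle then ensures that $E(V_t(x),H_t(x))$ is disjoint from every previously consulted edge set, so the $\xi$-variables read at the next step are independent of $\cF_t$, and the one-step computation above identifies the transition with (\ref{markov1}). This exhibits $(H_s(x))_{s \le t_L(x)}$ as a stopped Markov chain with the lazy-walk kernel, and the stopping time matches $\tau_L(h(x),N)$ since both are defined by the same geometric event $d(H_t,H_{t-qN}) \le 3$ for some $q \in \bbN^*$.

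Part (2) reduces to the same disjointness idea applied across the two orbits: for $t \le t_I(x,x')$, the definition of $t_I$ forces $E(V_s(x),H_s(x))$ and $E(V_{s'}(x'),H_{s'}(x'))$ to be disjoint for every pair $s,s' \le t$ with $s \ne s'$ (the asymmetric case $s \ne s'$ is exactly what the two pieces of $t_I$ control), so the stopped horizontal trajectories are measurable functions of two disjoint and hence independent families of Bernoulli variables. The main substantive content is the purely geometric disjointness lemma for $E(k,i)$; everything else is routine conditioning combined with the elementary Bernoulli computation. The only real technical point I would need to dispatch carefully is the behaviour at the boundary $b_\pm$, where one of the $2d$ neighbors of $i$ lies outside $\Lambda$: the restriction $j \in \Lambda_1$ in (\ref{tau_def}) together with the convention (\ref{xicon}) removes one effective jump direction, which is precisely what produces the boundary row $1-(2d-1)\nu$ of (\ref{markov1}).
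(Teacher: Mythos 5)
Your proposal is correct and follows essentially the same route as the paper: identify the block of $\xi$-variables that $F$ reads at each step (the paper's random function $f(\xi,s,H_s(x))$, measurable with respect to the edges within distance $2$ of the current horizontal position at vertical level $k+s \bmod N$), observe that the definitions of $t_L$ and $t_I$ guarantee these blocks are pairwise disjoint and hence independent up to the stopping time, and compute $\bbP[c(k,ij)=1]=\mu(1-\mu)^{4d-2}$ with the mutual exclusivity of the $2d$ jump events to recover the kernel (\ref{markov1}). Your write-up is in fact more explicit than the paper's (the disjointness threshold $d(i,i')>3$, the induction, and the boundary row), but the underlying argument is identical.
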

\begin{proof}
From the definition of the dynamics (\ref{tau_def}) we see that for any $x=(k,i)\in\cC$, $s\geq 0$, 
$$
H_{s+1}(x)=H_s(x)+f(\xi,s,H_s(x)),
$$
where $f(\cdot,s,j)$ is a random variable measurable with respect to 
$$
\{\xi(k+s\mod N,j j'):d(j,j)'\leq 2\}.
$$
Thus, we see from the definition (\ref{Loop_time}) that the random variables in the sequence 
$$
\{f(\cdot,s,H_s(x)):0\leq s\leq t_L(x)\}
$$ 
are each measurable with respect  to independent variables and are therefore independent.  Thus $\{H_s(x): 0\leq s\leq t_L(x)\}$ coincides with the $t_L(x)$ first steps of a Markov chain. From (\ref{tau_def}), a little computation shows that for any $s\geq 1$ the law of the variables $\{f(\cdot,s,H_s(x)):0\leq s\leq t_L(x)\}$  is given by 
$$
\bbP[f(\xi,s,j)=j'-j]=\bbP[W_{s+1}(x)=j'|W_{s}(x)=j]
$$
and the statement (1) of the proposition follows.  The second part of the proposition follows by analogous arguments.
\end{proof}
\section{Convergence of the average number of crossing orbits}
We note two useful formulas for the expectation and the variance of the number of crossings from one side of the volume to the other. As there is no ambiguity here,  we use the notation $S$ for $S_-$, that was defined in section 3.

\begin{equation}
\bbE[\frac{\cN}{N^d}]=\frac 1 {N^d}\sum_{x\in B_{-}}\bbE[{\bf 1}_{x\in S}]=\bbP[(1,\ldots,1)\in S]
\label{expectproba}
\end{equation}
by rotational invariance.
\begin{eqnarray}
{\rm Var }[\frac{\cN}{N^d}]&=&\frac{1}{N^{2d}}\sum_{x,x'\in B_{-}}\bbE[{\bf 1}_{x \in S}{\bf 1}_{x'\in S}]-\bbE[{\bf 1}_{x \in S}]\bbE[{\bf 1}_{x'\in S}]\nonumber\\
&=&\frac{1}{N^{2d}}\sum_{x,x'\in B_{-}}\bbP[x \in S,x'\in S]-\bbP[x\in S]\bbP[x'\in S]\nonumber\\
&=&\frac{1}{N^d}\sum_{x\in B_{-}}\bbP[(1,\ldots, 1)\in S,x\in S]-\bbP[(1,\ldots,1)\in S]\bbP[x\in S]
\nonumber\\
\label{variance}
\end{eqnarray}
\begin{prop}
\label{crossingfick}
Let $d\geq 7$ and $\mu\in (0,1)$, then $\forall \epsilon>0$,
$$
\bbP[|\frac{\cN}{N^{d-1}}-\kappa(\mu)|>\epsilon]\leq \frac{C}{\epsilon^2N^{\frac 1 4}}
,\quad \kappa(\mu)=\mu(1-\mu)^{4d-2}$$
\end{prop}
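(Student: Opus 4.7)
The plan is to apply Chebyshev's inequality to $\cN/N^{d-1}$, controlling the mean and the variance via the identities (\ref{expectproba})--(\ref{variance}) and the orbit-to-lazy-walk couplings of Proposition \ref{connection}. I will show
$$
\bbE[\cN/N^{d-1}] = \kappa(\mu) + O(1/N), \qquad \mathrm{Var}[\cN/N^{d-1}] = O(1/N^{1/4}),
$$
which combined with the triangle inequality immediately yield the stated bound (for small $N$ the bound is absorbed into the constant $C$).

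For the expectation, (\ref{expectproba}) reduces things to $N\,\bbP[x_0 \in S]$ with $x_0 = (1,\ldots,1)$. Conditioning on the first step of the orbit and using Proposition \ref{connection}(1), the event requires the horizontal component to move to $(1,\ldots,1,2)$ (contributing a factor $\nu$) and then the continuation orbit from that site to exit at $B_+$. Propositions \ref{exittime} and \ref{loop_exit} together (applied at the starting point $(1,\ldots,1,2)$, at distance one from $b$, so $\bbE[\tau_B] \leq CN$) ensure that the loop time exceeds the exit time except on an event of probability $O(N^{1-d/2})$, which is $O(N^{-5/2})$ for $d \geq 7$. On the complement the continuation orbit is distributed exactly as a lazy walk $W$, and a standard gambler's-ruin computation on the projection of $W$ onto the $d$-th axis yields crossing probability $1/(N-1)$. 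Altogether $N\,\bbP[x_0 \in S] = \kappa(\mu) + O(1/N)$.

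For the variance, (\ref{variance}) gives $\mathrm{Var}[\cN/N^{d-1}] = N^{-(d-2)}\sum_{x \in B_-}\mathrm{Cov}({\bf 1}_{x_0 \in S}, {\bf 1}_{x \in S})$. I split the sum according to whether $d(h(x_0), h(x)) \leq N^{3/4}$ (``close'') or $> N^{3/4}$ (``far''). There are at most $O(N^{1+3(d-1)/4})$ close points, and each covariance is bounded trivially by $\bbP[x \in S] = O(1/N)$, contributing $O(N^{(5-d)/4}) = O(N^{-1/2})$ for $d \geq 7$. For a far pair, Proposition \ref{connection}(2) renders the horizontal orbits mutually independent up to their collision time $t_I(x_0,x)$; on the intersection of the events $\{t_L(x_0) > t_B(x_0)\}$, $\{t_L(x) > t_B(x)\}$ and $\{t_I(x_0,x) > t_B(x_0) \vee t_B(x)\}$ the indicator ${\bf 1}_{x_0 \in S}{\bf 1}_{x \in S}$ coincides with its analogue for two independent lazy walks, whose joint probability factorises. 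Propositions \ref{loop_exit} and \ref{intersection} bound the failure probability by $O(N^{-9/4})$ (the dominant error for $d \geq 7$), so $|\mathrm{Cov}| = O(N^{-9/4})$ uniformly over far $x$. Summing over the $O(N^d)$ far points produces $O(N^{-1/4})$.

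The step I expect to be the main obstacle is the far-regime covariance computation: one must carefully argue that on the good event the indicators ${\bf 1}_{x_0 \in S}$ and ${\bf 1}_{x \in S}$ are genuinely measurable with respect to the two horizontal orbits only up to their respective exit times, so that Proposition \ref{connection}(2) applies and the joint probability factorises as for independent lazy walks. The sharp exponent in Proposition \ref{intersection} is crucial here: any intersection bound weaker than $o(1/N^2)$ would fail to make the sum over the $O(N^d)$ far points decay, and so would not deliver a variance bound going to zero with $N$.
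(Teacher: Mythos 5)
Your proposal is correct and follows essentially the same route as the paper's proof: the same first-step conditioning plus gambler's ruin with an $O(N^{1-d/2})$ loop-time correction for the mean, and the same split of the covariance sum at distance $N^{3/4}$ with the a priori $C/N$ bound on close pairs and the factorisation-on-the-good-event argument (via Propositions \ref{connection}, \ref{loop_exit} and \ref{intersection}) giving $O(N^{-9/4})$ for far pairs. The only cosmetic difference is that the paper only establishes the one-sided bound $\Delta_x\leq 0$ plus error on the good event, which suffices for an upper bound on the variance, whereas you claim a two-sided covariance bound; this changes nothing.
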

\begin{proof}
We first prove that 
\begin{equation}
\bbE[\frac{\cN}{N^d}]=\frac {\kappa(\mu)} {N-1} +O(\frac 1 {N^{\frac 5 2}}).
\label{fick7d}
\end{equation}
We start with (\ref{expectproba}) and  use the notation ${\bf 1}$ for the point of $\Lambda$ given by ${\bf 1}=(1,\ldots,1)$.
Then, with the shorthand notation $t_{B}=t_{B}(2,\bo+e_d)$ and $ t_{L}=t_{L}(2,\bo+e_d)$,
\begin{equation}
\bbP[(1,\bo)\in S]=\bbP[(1,\bo)\in S,t_{B}< t_{L}]+\bbP[(1,\bo)\in S, t_{B}\geq t_{L}]
\label{decomposition2}
\end{equation}
For the second term we use the correspondence between the orbits and lazy random walks stated in Proposition \ref{connection}  :
\begin{equation}
\bbP[(1,1)\in S, t_{B}\geq t_{L}]\leq \bbP[t_B\geq t_L]=  \hat \bbP[\tau_B\geq \tau_L] .
\label{coupling2}
\end{equation}
$\hat \bbP$ refers to the law of the lazy random walk in Proposition \ref{connection} and we have used the notation $\tau_B=\tau_B(\bo+e_d)$, $\tau_L=\tau_L(\bo+e_d)$.
On the other hand we have :
\begin{equation}
\bbP[(1,\bo)\in S,t_{B}< t_{L}]=\hat\bbP[C\cap\{\tau_B< \tau_L\}]=\hat\bbP[C]-\hat\bbP[C\cap\{\tau_B\geq \tau_L\}]
\label{finite_gambler2}
\end{equation}
with 
$$
C=\{W_t(\bo): W_1(1)= \bo+e_d,\;\exists t>1,\;\forall s< t,\; W_s(\bo)\notin b_-,\; W_t(\bo)\in b_+\}.
$$
Because $\bbP[W_1(\bo)=\bo+e_d]=\kappa(\mu)$, the gambler's ruin argument applied to the lazy random walk gives :
\begin{equation}
\hat\bbP[C]=\frac {\kappa(\mu)} {N-1}.
\label{gambler2}
\end{equation}
Indeed, the $d$-th component $W^d_t(\bo)$ is itself a lazy random walk. 
So we get for (\ref{decomposition2}) :
\begin{equation}
\bbP[(1,\bo)\in S]=\frac {\kappa(\mu)} {N-1}+{\mathcal R}
\label{crossing_proba2}
\end{equation}
where 
\begin{equation}
|{\mathcal R}|\leq 2\; \hat\bbP[\tau_B\geq \tau_L]\leq \frac{C}{N^{d/2-1}},
\label{remainder}
\end{equation}
the last inequality follows from proposition \ref{loop_exit} and proposition \ref{exittime}. Thus, (\ref{crossing_proba2})  gives (\ref{fick7d}).
We turn now to the variance.  Defining 
$$
\Delta_x=\bbP[(1,{\bf 1})\in S,x\in S]-\bbP[(1,{\bf 1})\in S]\bbP[x\in S]
$$
we write :
\begin{eqnarray}
N^2{\rm Var }[\frac{\cN}{N^d}]&=&\frac {1} {N^{d-2}}\sum_{x\in B_{-}}\Delta_x\nonumber\\
&=&\frac {1} {N^{d-2}}(\sum_{x\in I_1}\Delta_x+\sum_{x\in I_2}\Delta_x)
\label{var1}
\end{eqnarray}
where 
$$
I_1=\{x=(k,i)\in B_-:d(i,{\bf 1})> N^\frac 3 4\}
$$
and
$$
I_2=\{x=(k,i)\in B_-:d(i,{\bf 1})\leq N^\frac 3 4\}.
$$

By using (\ref{crossing_proba2}) and (\ref{remainder}), we see that we have the a priori bound on $\Delta_x$:
\begin{equation}
\Delta_x\leq \frac C N
\label{apriori}
\end{equation}
and since there are $CN^{\frac 1 4(3d+1)}$ points in $I_2$, we get :
\begin{equation}
\frac {1} {N^{d-2}}\sum_{x\in I_2}\Delta_x\leq \frac C {N^{\frac 1 2}}
\label{var2}
\end{equation}
For the sum over points in $I_1$ we proceed differently and use the fact that random walks starting sufficienty far away from each other have a small probability of  meeting before the two of them exits, as expressed by Proposition \ref{intersection}.
To implement this strategy, we will use a decomposition with respect to the events :
 \begin{eqnarray}
A_{1,x}&=&\{t_L(1,{\bf 1})>t_B(1,{\bf 1}),t_L(x)>t_B(x), t_I((1,{\bf 1}),x)>(t_B(1,{\bf 1})\vee t_B(x))\}\nonumber\\
A_1&=&\{t_L(1,{\bf 1})>t_B(1,{\bf 1})\}\nonumber\\
A_x &=&\{t_L(x)>t_B(x)\}\nonumber
\end{eqnarray}
Namely we write ,
\begin{eqnarray}
\bbP[(1,{\bf 1})\in S,x\in S]&=&\bbP[(1,{\bf 1})\in S,x\in S, A_{1,x}]+\bbP[(1,{\bf 1})\in S,x\in S, A^c_{1,x}]\nonumber\\
\bbP[(1,{\bf 1})\in S]&=&\bbP[(1,{\bf 1})\in S, A_{1}]+\bbP[(1,{\bf 1})\in S, A^c_{1}]\nonumber\\
\bbP[x\in S]&=&\bbP[x\in S, A_{x}]+\bbP[x\in S, A^c_{x}]\nonumber\\
\end{eqnarray}
and since we have (with $x=(k,i)$) :
\begin{eqnarray}
\bbP[(1,{\bf 1})\in S,x\in S, A_{1,x}]&=&\hat\bbP[{\bf 1}\in C,i\in C, \hat A_{1,i}]\nonumber\\
&\leq&\hat\bbP[{\bf 1}\in C,i\in C, \hat A_{1},\hat A_{i}]\nonumber\\
&\leq&\hat\bbP[{\bf 1}\in C, \hat A_{1}]\hat\bbP[i\in C, \hat A_{i}]\nonumber
\end{eqnarray}
where $\hat\bbP$ denotes the law of two independent lazy random walks, starting at ${\bf 1}$ and $i$, and
 \begin{eqnarray}
\hat A_{1,i}&=&\{\tau_L({\bf 1})>\tau_B({\bf 1}),\tau_L(i)>\tau_B(i), \tau_I({\bf 1},i)>(\tau_B({\bf 1})\vee \tau_B(i))\}\nonumber\\
\hat A_1&=&\{\tau_L({\bf 1})>\tau_B({\bf 1})\}\nonumber\\
\hat A_i &=&\{\tau_L(i)>\tau_B(i)\}.\nonumber
\end{eqnarray}
Thus, we obtain that :
\begin{equation}
\bbP[(1,{\bf 1})\in S,x\in S, A_{1,x}]-\bbP[(1,{\bf 1})\in S, A_{1}]\bbP[x\in S, A_{x}]\leq 0,
\end{equation}
since $\bbP[(1,{\bf 1})\in S, A_{1}]=\hat\bbP[{\bf 1}\in C, \hat A_{1}]$ and $\bbP[x\in S, A_{x}]=\hat\bbP[i\in C, \hat A_{i}]$.
But this means that the sum over points of $I_1$ in (\ref{var1}) may be bounded by
\begin{eqnarray}
\sum_{x\in I_1}\Delta_x&\leq&\sum_{x\in I_1}\bbP[(1,{\bf 1})\in S,x\in S, A^c_{1,x}]-\bbP[(1,{\bf 1})\in S, A^c_{1}]\bbP[x\in S, A_{x}]\nonumber\\
&&-\sum_{x\in I_1}\bbP[(1,{\bf 1})\in S, A_{1}]\bbP[x\in S, A^c_{x}]\nonumber\\
&\leq& \sum_{x\in I_1}\bbP[(1,{\bf 1})\in S,x\in S, A^c_{1,x}]\nonumber\\
&\leq& \sum_{x\in I_1} \bbP[A_1^c]+\bbP[A_x^c]+\bbP[A_I^c(x)]\nonumber\\
&\leq& \sum_{x\in I_1} 2 \hat \bbP[\hat A_1^c]+\bbP[ A_I^c(x)]\label{var3}
\end{eqnarray}
where we have introduced the set :
$$
A_I(x)=\{t_I((1,\bo),x)>(t_B(1,{\bf 1})\vee t_B(x))\}.
$$
But we write 
\begin{eqnarray}
\bbP[A_I^c(x)]&=&\bbP[A_1,A_x,A_I^c(x)]+\bbP[A_1^c\cup A^c_x,A_I^c(x)]\nonumber\\
&\leq&\bbP[A_1, A_x, A_I^c(x)]+\bbP[A_1^c,A_I^c(x)]+\bbP[A_x^c,A_I^c(x)]\nonumber\\
&\leq&\hat\bbP[\tau_I(\bo,i)\leq(\tau_B(\bo)\vee\tau_B(i))]+2\hat\bbP[\hat A_1^c]\nonumber\\
\end{eqnarray}
and injecting this bound back in (\ref{var3}), we get :
$$
\sum_{x\in I_1}\Delta_x\leq\sum_{x\in I_1}4\hat \bbP[\hat A_1^c]+\hat\bbP[\tau_I(\bo,i)\leq(\tau_B(\bo)\vee\tau_B(i))].
$$
The first term is bounded using Proposition \ref{loop_exit} by $\frac C {N^{\frac 5 2}}$ while the second term is bounded using Proposition \ref{intersection} by $\frac C {N^{\frac{9}{4}}}$, because $d(\bo,i)> N^{\frac{3}{4}}$.  Thus using that $|I_1|\leq N^d$ we obtain that
$$
\frac{1}{N^{d-2}}\sum_{x\in I_1}\Delta_x \leq\frac{C}{N^{\frac{1}{4}}}
$$
which, together with (\ref{var2}) gives the result.

\end{proof}
\section{ Fick's law}
\begin{theorem}
\label{PFickmain}
Let  $d\geq 7$, $\rho_I,\rho_+,\rho_-\in (0,1)$ and  $\xi$ a family of Bernoulli random variables of parameter $\mu$ and $\{\sigma(x;0): x\in \cC\}$ be a set of independent Bernoulli random variables with $\bbE[\sigma(x;0)]=\rho_-$ if $x\in B_{-}$, $\bbE[\sigma(x;0)]=\rho_+$  if $x\in B_{+}$, and $\bbE[\sigma(x;0)]=\rho_I$ if $x\notin B_{-}\cup B_{+}$.  
\begin{enumerate}
\item For any $N\in\bbN^*$ and any $t\geq \overline t=N^{d+1}$, in law, we have $J(l,t)=J(l,\overline t):=\overline J(l)$. 
\item For any $\delta>0$ and any $l\in{1,\ldots,N-1}$,
\begin{equation}
\lim_{N\to\infty}\bbP[\left|N\bar J(l  )-\kappa(\mu)(\rho_--\rho_+)\right|>\delta]=0,
\label{Fickstationarybis}
\end{equation}
where $\kappa(\mu)=\mu(1-\mu)^{4d-2}$.
\item For any $\delta>0$, $\varepsilon>0$, $t>0$ and $l\in{1,\ldots,N-1}$,
\begin{equation}
\lim_{N\to\infty}\bbP[\left|N(J(l  ,t N^2)+L(l, t N^2))-\kappa(\mu)(\rho_--\rho_+)\right|>\delta]=0
\label{Fick}
\end{equation}
and $L(l,t)$  satisfies :

\begin{equation}
\bbP[|L(l, t N^2)|>\varepsilon]\leq C \varepsilon^{-1}\exp-\kappa(\mu) t+O(\frac{1}{N^{\frac{5}{2}}}),
\label{Fickbis}
\end{equation}
Moreover, for any $l$ and  for any $t\geq N^{d+1}$, $L(l,t)=0$.
\end{enumerate}
\end{theorem}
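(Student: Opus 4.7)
The plan is to combine three earlier results: Proposition~\ref{Pcrossing} (concentration of $J(l,t)$ around $\tfrac{\cN}{N^{d}}(\rho_--\rho_+)-L(l,t)$ with Gaussian tails), Proposition~\ref{crossingfick} (the weak law $\cN/N^{d-1}\to\kappa(\mu)$), and the random-walk coupling of Proposition~\ref{connection} combined with the exit-time estimates of Section~4, which will be used to control the remainder $L(l,tN^{2})$ in the diffusive window.

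Parts (1) and (2) are essentially immediate. For (1), since $F$ is bijective and every orbit is a loop of length at most $\bar t=N^{d+1}$, for $t\geq\bar t$ and every $x\in\cC$ the random variable $\sigma(x;t)$ equals either $\sigma(F^{-t}(x);0)$ if $x$ lies on an internal orbit (still i.i.d.\ Bernoulli$(\rho_I)$ by bijectivity of $F$) or a boundary injection $\sigma^{\pm}_{F^{-n(x)}(x)}(t-n(x)-1)$ with $n(x)\leq\bar t$ if the orbit of $x$ meets $B$. Since boundary injections are i.i.d.\ in time and independent from the initial condition, the joint law of $\sigma(\cdot;t)$, and hence of $J(l,t)$, is invariant under $t\mapsto t+1$ once $t\geq\bar t$. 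For (2), taking $t\geq\bar t$ makes $L(l,t)\equiv 0$ by Proposition~\ref{Pcrossing}, which at accuracy $\delta/N$ yields
\[
\bbP\!\left[\,\bigl|NJ(l,t)-\tfrac{\cN}{N^{d-1}}(\rho_--\rho_+)\bigr|>\delta/2\right]\leq 2\exp\!\bigl(-\delta^{2}N^{d-2}/4\bigr);
\]
combining with Proposition~\ref{crossingfick} via the triangle inequality produces (\ref{Fickstationarybis}).

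For (3), applying Proposition~\ref{Pcrossing} at time $tN^{2}$ and invoking Proposition~\ref{crossingfick} in the same way yields (\ref{Fick}); the only real work is the bound (\ref{Fickbis}). Since $|L(l,t)|\leq (6/N^{d})(\cN_-(l,t)+\cN_+(l,t))$, Markov's inequality in $\xi$ reduces (\ref{Fickbis}) to bounding $\bbE[\cN_\pm(l,tN^{2})]/N^{d}$. From the definition of $S_\pm(l,t)$ one has $\cN_\pm(l,t)\leq|\{x\in B_\pm: t_B(x)>t\}|$, and translation invariance of $\xi$ in the horizontal directions gives $\bbE[\cN_\pm(l,tN^{2})]/N^{d}\leq\bbP[t_B(x_0)>tN^{2}]$ for any fixed $x_0\in B_\pm$. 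I would split this probability according to $\{t_L(x_0)\leq t_B(x_0)\}$: on this event Proposition~\ref{connection} couples to the lazy random walk and Propositions~\ref{exittime} and~\ref{loop_exit} give a probability of order $O(N^{-(d/2-1)})=O(N^{-5/2})$ for $d\geq 7$; on its complement the orbit agrees with the walk $W_{\cdot}(h(x_0))$, so $t_B(x_0)$ equals the first strictly positive return of $W$ to $b$. Starting from $h(x_0)\in b_-$, at time~$1$ the walk moves to $h(x_0)+e_d$ with probability $\nu$ and otherwise stays in $b_-$, so the second bound of Proposition~\ref{exittime} gives $\bbP[\tau_B(h(x_0)+e_d)>tN^{2}-1]\leq C\exp(-\kappa(\mu)t)$, the rate $\kappa(\mu)$ being precisely the nearest-neighbour jump rate $\nu$ of the walk. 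Feeding both bounds into Markov's inequality produces (\ref{Fickbis}), and the final assertion $L(l,t)=0$ for $t\geq N^{d+1}$ is again part of Proposition~\ref{Pcrossing}. The main bookkeeping hurdle is ensuring all error terms collapse to the announced $O(N^{-5/2})$; this is dictated by the loop estimate of Proposition~\ref{loop_exit} and is precisely the constraint that forces $d\geq 7$.
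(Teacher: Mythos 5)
Your proposal is correct and follows essentially the same route as the paper: part (1) via the orbit/excursion structure and the i.i.d.\ boundary injections, part (2) by combining the Hoeffding bound of Proposition~\ref{Pcrossing} (with $L=0$ at stationarity) with Proposition~\ref{crossingfick} via the triangle inequality, and part (3) by Markov's inequality on $L$ together with the bound $\cN_\pm(l,t)\leq|\{x\in B_\pm:t_B(x)>t\}|$, the coupling of Proposition~\ref{connection}, and the split on $\{t_L\leq t_B\}$ handled by Propositions~\ref{loop_exit} and~\ref{exittime}, which is exactly how the paper obtains $C\exp(-\kappa(\mu)t)+O(N^{-5/2})$.
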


\begin{proof}
To show the first statement of the theorem, we use (\ref{decomporb}), (\ref{jorb}) and (\ref{dexccu}).  We observe that for any $t\geq 0$, the $\sigma$ that appear in (\ref{jorb}) are all Bernoulli variables with same parameter $\rho_I$.  Next,  if $t\geq N^{d+1}$, then all the $\sigma$ appearing in (\ref{dexccu}) are Bernoulli variables of parameter $\rho_L$ or $\rho_R$ depending on whether $x$ belongs to $B_-$ or $B_+$, because $N^{d+1}$ is the maximal size of an orbit. 

In order to obtain (\ref{Fickstationarybis}), we decompose :
\begin{eqnarray}
\bbP[\left|N\overline J(l  )-\kappa(\mu)(\rho_--\rho_+)\right|>\delta]\leq&& \bbP\left[\left|J(l,\overline t)-\frac{{\mathcal N}}{N^d}(\rho_--\rho_+)+L(l,t)\right|\geq \frac{\delta}{3N} \right]\nonumber\\
&+&\bbP[\left|\frac{{\mathcal N}}{N^{d-1}}(\rho_--\rho_+)-\kappa(\mu)(\rho_--\rho_+)\right|\geq\frac\delta 3]\nonumber \\
&+&\bbP[NL(l,\overline t)>\frac\delta  3]
\label{decompopo}
\end{eqnarray}

Proposition \ref{Pcrossing} implies that $\forall \delta>0$ and $\forall t\in\bbN^*$,
$$
\bbP\left[\left|J(l,t)-\frac{{\mathcal N}}{N^d}(\rho_--\rho_+)+L(l,t)\right|\geq \frac{\delta}{N} \right]\leq 2\exp(-\delta^2 N^{d-2}), l\in{1,\ldots,N-1}, \quad
$$
and by Proposition  \ref{crossingfick}, we already know that
\begin{equation}
\bbP[\left|\frac{{\mathcal N}}{N^{d-1}}(\rho_--\rho_+)-\kappa(\mu)(\rho_--\rho_+)\right|\geq\delta]\leq \frac {C}{\delta^2 N^{\frac 1 4}},\;\forall \delta>0.
\end{equation}
Moreover by Proposition \ref{Pcrossing}, $\forall t\geq N^{d+1}$, $L(l,t)=0$, thus we obtain (\ref{Fickstationarybis}).

\noindent In order to obtain (\ref{Fick}), we decompose :
\begin{eqnarray}
\bbP[\left|N(J(l  ,t N^2)+L(l, t N^2))-\kappa(\mu)(\rho_--\rho_+)\right|>\delta]\leq &&\bbP\left[\left|J(l,N^2t)+L(l,N^2t)-\frac{{\mathcal N}}{N^d}(\rho_--\rho_+)\right|\geq \frac{\delta}{2N} \right]\nonumber\\
&+&\bbP[\left|\frac{{\mathcal N}}{N^{d-1}}(\rho_--\rho_+)-\kappa(\mu)(\rho_--\rho_+)\right|\geq\frac\delta 2]\nonumber \\
\end{eqnarray}
Using again Proposition \ref{Pcrossing} and Proposition \ref{crossingfick} we conclude.
 Thus, we are left to show,
\begin{equation}
\bbP[ |L( l,t N^2)|\geq\varepsilon]\leq \frac{C}{\varepsilon}\exp(-\kappa(\mu) t)+O(\frac{1}{N^{\frac{5}{2}}}),\;\forall \varepsilon>0.
\label{limL}
\end{equation}
We note that by Proposition \ref{Pcrossing} and Markov inequality :
$$
\bbP[|L( l ,t N^2)|\geq\varepsilon]\leq \frac {1} {\varepsilon N^{d}}3(\bbE[{\mathcal N}_-( l,t N^2)](\rho_-+\rho_I)+\bbE[{\mathcal N}_+( l,N^2 t)](\rho_++\rho_I))
$$
and next that  :
\begin{eqnarray}
\frac 1 {N^{d}}\bbE[\cN_-(l,N^2  t)]&=&\;\bbP[(1,{\bf 1})\in S(l,t N^{2} )]\nonumber\\
&\leq& \;\bbP[F(1,{\bf 1})=(2,{\bf 1}+e_d),t_B(2,{\bf 1}+e_d)\geq t N^{2} ,t_B(2,{\bf 1}+e_d)< t_L(2,{\bf 1}+e_d)]\nonumber\\
&&+\; \bbP[t_B(2,{\bf 1}+e_d)\geq t_L(2,{\bf 1}+e_d)]\nonumber\\
&\leq& \;\hat\bbP[\tau_B({\bf 1}+e_d)\geq t N^{2} ]+\hat\bbP[\tau_L({\bf 1}+e_d)\leq \tau_B({\bf 1}+e_d)]\nonumber\\
&\leq& \;C\exp(-\kappa(\mu)  t )+O(\frac 1 {N^{\frac 5 2}}).\nonumber\\
\end{eqnarray}
by (\ref{exp_bound}).  Since the case of $\cN_+$ is analogous, we conclude.

\end{proof}

\vspace{5mm}
\noindent {\bf Acknowledgments.}
\noindent I thank Carlos Mejia-Monasterio, Justin Salez and Gordon Slade for useful discussions.
I thank the Yukawa Institute for Theoretical Physics at Kyoto University. Discussions during the YITP workshop YITP-W-13-06 on "Mathematical Statistical Physics" were useful to complete this work.
I was supported by the French ANR grant SHEPI.

\end{document}